\tikzstyle{ov}=[shape=rectangle,
\tikzstyle{av}=[shape=rectangle,
\tikzstyle{lv}=[shape=circle,draw=black!50,thick]
\tikzset{
   >=stealth',
   punkt/.style={rectangle, rounded corners, draw=black, very thick, text width=10cm, minimum height=1cm, text centered, color=blue!50!black},
       punkt2/.style={rectangle, rounded corners, draw=black, very thick, text width=7cm, minimum heighj=0.7cm, text centered, color=blue!50!black},
   pil/.style={->, thick, color=blue!50!black}
}
\newcommand{\indep}{\rotatebox[origin=c]{90}{$\models$}}
\newcommand{\nindep}{\centernot{\indep}}
\newtheorem{theorem}{Theorem}[section]
\newtheorem{corollary}{Corollary}[theorem]
\newtheorem{proposition}{Proposition}[theorem]
\newtheorem{assumption}{Assumption}
\newtheorem{definition}{Definition}
\newtheorem{example}{Example}
\newtheorem*{remark}{Remark}
\begin{document}

\title{Using negative controls to identify causal effects\\ with invalid instrumental variables}

\author[1]{Oliver Dukes}
\author[2]{David B. Richardson}
\author[3]{Zachary Shahn}
\author[4]{James M. Robins}
\author[5]{Eric J. Tchetgen Tchetgen}

\affil[1]{Ghent University, Ghent,  Belgium}
\affil[2]{University of California Irvine,  Irvine,  CA,  USA}
\affil[3]{CUNY School of Public Health,  New York,  NY,  USA}
\affil[4]{Harvard T. H. Chan School of Public Health,  Boston,  MA,  USA}
\affil[5]{University of Pennsylvania, Philadelphia, PA,  USA}

\date{}

\maketitle

\begin{abstract}
Many proposals for the identification of causal effects require an instrumental variable that satisfies strong, untestable unconfoundedness and exclusion restriction assumptions. In this paper, we show how one can potentially identify causal effects under violations of these assumptions by harnessing a negative control population or outcome. This strategy allows one to leverage sub-populations for whom the exposure is degenerate, and requires that the instrument-outcome association satisfies a certain parallel trend condition. We develop the semiparametric efficiency theory for a general instrumental variable model, and obtain a multiply robust, locally efficient estimator of the average treatment effect in the treated. The utility of the estimators is demonstrated in simulation studies and an analysis of the Life Span Study.
\end{abstract}

Key words: causal inference; unmeasured confounding; semiparametric theory.

\doublespacing

\section{Introduction}

There is now a long tradition in causal inference of developing identification strategies that do not rely on a `no unmeasured confounding' (or conditional exchangeability) type assumption.  One of the most popular of these is the `instrumental variable' approach. For validity, instrumental variable-based causal inference relies on having access to a variable that satisfies the three core instrumental variable conditions: IV.1 it is associated with the exposure (relevance); IV.2 it affects the outcome only via the exposure (exclusion restriction); and IV.3 it shares no unmeasured common causes with the outcome (unconfoundedness) \citep{hernan2006instruments}.  

Unfortunately, in much applied work the three core instrumental variable conditions may be violated. Recently, \citet{davies2017compare} and \citet{danieli2022negative} have considered how one can assess violations of IV.2 and IV.3 using \textit{negative controls} \citep{lipsitch2010negative,sofer2016negative}. For example, physician preference is a popular choice of instrument in pharmacoepidemiologic studies, where interest might be on the causal effect of a particular treatment for a given disease. A negative control \textit{population} for the instrument could be constructed based on patients for whom it is known that the treatment has no effect. Under IV.2 and IV.3, in this population physician preference should not be associated with the outcome of interest at all; hence any association is indicative of a violation of the exclusion restriction and/or unconfoundedness. In this work, we will introduce a more general concept of a \textit{reference population}, of which a negative control population is a special case.

Alternatively, a negative control \textit{outcome} is a variable that cannot plausibly be caused by the exposure. Under the core instrumental variable assumptions, the instrument would then not be associated with the negative control outcome. For example, one might choose a clinical outcome unlikely to be affected by the medication. There is an existing literature on using negative control outcomes as a means to detect and remove bias in an analysis based on confounding adjustment \citep{lipsitch2010negative,sofer2016negative}; however, they have been used less for assessing the instrumental variable assumptions. \citet{davies2017compare} describe how negative control outcomes can be used to compare the plausibility of instrumental variable approaches with analyses that assume no unmeasured confounding; \citet{sanderson2021use} use these variables to check for violations of unconfoundedness due to population stratification in Mendelian Randomisation studies.

We go beyond falsification testing and propose conditions for identification and estimation of effects with an invalid instrument variable. Our results rely on a condition on the stability of the association between the instrument and outcome, which is closely related to the \textit{parallel trends} assumption in the difference-in-difference literature. Intuitively, our assumption states that the instrument-outcome association that arises due to violations of IV.2 and IV.3 in the na\"ive instrumental variable analysis must equal the instrument-outcome association in a reference population. If our assumption holds, the potential instrument does not necessarily have to satisfy the conditions IV.2 and IV.3, but still must be predictive of the exposure, therefore satisfying IV.1. A special feature of our framework, relative to other negative control strategies, is that it allows one to leverage sub-populations who were \textit{not eligible} to be exposed. We hope that our methodology may be generally useful in epidemiologic applications where such sub-populations arise. 

In this paper, we first formalise conditions for identification of the average treatment effect on the treated given a potentially invalid instrument and negative control/reference population. These results hold interesting connections with recent work on instrumental variables, difference-in-differences, and data fusion. We then develop a theory of estimation and inference for these causal estimands, working under a semiparametric model for the observed data distribution. This is in contrast to the existing work on falsification testing, which is developed in the context of linear regression models and two stage least squares estimation. Our theory yields several novel classes of estimators, which are distinct from those proposed elsewhere in the instrumental variable literature. Our estimators turn out to have a \textit{quadruple robustness} property. Specifically, if the semiparametric model for the causal contrast is correct, then our estimators are unbiased if one out of four sets of additional constraints on the observed data distribution hold. The methods are illustrated in simulation studies and an analysis of the Life Span Study. Our results are closely related to and develop from the \textit{bespoke instrumental variable} design recently proposed in occupational epidemiology \citep{richardson2021bespoke}. That work was not developed with specific reference to leveraging negative controls in instrumental variable designs, and lacked general semiparametric estimation and efficiency theory, which we hereby provide. We also give new identification results. This theory may be of independent interest for those working on semiparametric instrumental variable models, as we leverage ideas from unrelated work on generalised odds ratio modelling to construct quadruply robust estimators which to our knowledge has not been done before.

\section{Identification}

\subsection{Bias detection using reference populations}

We will first consider a setting where we have i.i.d. copies of a point-treatment exposure $A$ (for the moment assumed to be binary), an end of study outcome $Y$, a vector of covariates $\mathbf{C}$, and a potentially invalid (and, for the moment, binary) instrument $Z$. All are measured in a population of interest ($S=1$).
Suppose furthermore that data are also available on an additional population ($S=0$) satisfying a key property given in Assumption 1 below and for which $Y$ and $Z$ are observed. Let $Y^a$ define the potential outcome that would have been observed, had an individual been assigned to intervention level $A=a$.

In what follows, the population for whom $S=0$ will be used to correct for bias that may occur when $Z$ is an invalid instrumental variable. In order to do this, we will first list the assumptions required to detect potential bias, and eventually to identify the causal effect. 
\begin{assumption}Reference population: Let $\mathcal{C}_0$ ($\mathcal{C}_1$) denote the support of $\mathbf{C}$ in individuals with $S=0$ ($S=1$). Then $\forall z\in \{0,1\}$ and $\forall \mathbf{c} \in \mathcal{C}_0\cap \mathcal{C}_1$,\label{ref}
\[E(Y|Z=z,S=0,\mathbf{C}=\mathbf{c})=E(Y^0|Z=z,S=0,\mathbf{C}=\mathbf{c}).\]
\end{assumption}

Assumption \ref{ref} states that in the reference population, the conditional expectation of the observed outcome (conditional on $Z$ and $\mathbf{C}$) is equal to the conditional expectation of the potential outcome under no treatment.  This can be plausible if by an exogenous intervention which may be a physical or temporal restriction, individuals were unable to receive treatment. In pharmacoepidemiologic studies, a reference population may be sourced from time periods when a new treatment was not yet available, or from patients not eligible to receive the treatment.  Alternatively, in environmental health studies, a natural reference population may be determined based on physical/spatial considerations that would have prevented access to the exposure. 

A negative control population where the exposure is not necessarily degenerate but is hypothesised to have no effect, could also form a reference population. However, the concept of a reference population allows for the possibility that the treatment causally impacts the outcome in individuals with $S=0$, if these individuals were not excluded from access to treatment. In general, it is advantageous to choose a reference population that is similar in important baseline characteristics to the $S=1$ population. This is because to eventually identify a causal effect, we will need to transport associations across from the reference to the $S=1$ population. We emphasise that it is not obvious how other negative control-type designs can generally leverage sub-populations for whom treatment is degenerate. These designs usually involve contrasts between levels of $A$ that are expected to be zero when there is no bias. Hence some variation in exposure is typically required \citep{lipsitch2010negative}.

\begin{figure}
    \centering
    \begin{minipage}{.5\textwidth}
	\begin{tikzpicture}[scale=0.7]
\node[] (z) at (-2, 0)   {$Z$};
\node[] (a) at (0, 0)   {$A$};
\node[] (c) at (2, 2)   {$C$};
\node[] (y) at (4, 0)   {$Y$};
\node[draw,circle] (u) at (2,-2)   {$U$};
\node[] (r) at (0,-4)   {\textrm{(a) Population of interest ($S=1$)}};
\path[->] (z) edge node {} (a);
\path[->] (a) edge node {} (y);
\path[->] (c) edge node {} (a);
\path[->] (c) edge node {} (y);
\path[->] (c) edge node {} (z);
\path[->] (z) edge  [bend right] node {} (y);
\path[->,dashed] (u) edge node {} (y);
\path[->,dashed] (u) edge node {} (a);
\path[->,dashed] (u) edge node {} (c);
\path[->,dashed] (u) edge node {} (z);
\end{tikzpicture}
\end{minipage}%
\begin{minipage}{.5\textwidth}
	\begin{tikzpicture}[scale=0.7]
\node[] (z) at (-2, 0)   {$Z$};
\node[] (c) at (2, 2)   {$C$};
\node[] (y) at (4, 0)   {$Y$};
\node[draw,circle] (u) at (2,-2)   {$U$};
\node[] (r) at (0,-4)   {\textrm{(b) Reference population ($S=0$)}};
\path[->] (c) edge node {} (y);
\path[->] (c) edge node {} (z);
\path[->] (z) edge  [bend right] node {} (y);
\path[->,dashed] (u) edge node {} (y);
\path[->,dashed] (u) edge node {} (c);
\path[->,dashed] (u) edge node {} (z);
\end{tikzpicture}
\end{minipage}%
\caption{Possible DAG with a reference population}
\label{DAG}
\end{figure}

A potential data-generating process is illustrated in the causal diagram in Figure \ref{DAG}. Here, the edges from $Z$ to $Y$ and  $U$ to $Z$ in both populations indicate violations of IV.2 and IV.3. As noted in \citet{davies2017compare}, under Assumption \ref{ref} and conditions IV.2 and IV.3 holding in both populations, one would expect a null association between $Z$ and $Y$ in the reference population. This can be formalised as follows:
\begin{proposition}\label{main_prop}
Suppose that $\forall z \in \{0,1\}$ and $\forall \mathbf{c} \in \mathcal{C}_0 \cap \mathcal{C}_1$,
\begin{itemize}
    \item If $Z=z$ and $S=1$, then $Y^{0,z}=Y$ (instrument consistency).
    \item $E(Y^{0,z}|Z=z,S=0,\mathbf{C}=\mathbf{c})=E(Y^{0,z}|S=0,\mathbf{C}=\mathbf{c})$ (unconfoundedness). \item $E(Y^{0,z}|S=0,\mathbf{C}=\mathbf{c})=E(Y^{0}|S=0,\mathbf{C}=\mathbf{c})$ (exclusion restriction).
    \end{itemize}
Then under Assumption \ref{ref}, $\forall \mathbf{c} \in \mathcal{C}_0 \cap \mathcal{C}_1$
\begin{align}\label{nc_equal}
E(Y|Z=1,S=0,\mathbf{C}=\mathbf{c})=E(Y|Z=0,S=0,\mathbf{C}=\mathbf{c}).
\end{align}
\end{proposition}
See \ref{sm:proof_prop} of the appendix for a proof. This statement provides the basis of a valid test of IV.2 and IV.3 in the reference population. To draw conclusions about the population of interest, we would furthermore need to assume that
$\forall z \in \{0,1\}$ and $\forall \mathbf{c}\in \mathcal{C}_0 \cap \mathcal{C}_1$,
\begin{align}\label{shared_bias}
&E(Y^{0,z}|Z=z,S=1,\mathbf{C}=\mathbf{c})=E(Y^{0}|S=1,\mathbf{C}=\mathbf{c})\nonumber \\& \Rightarrow E(Y^{0,z}|Z=z,S=0,\mathbf{C}=\mathbf{c})=E(Y^{0}|S=0,\mathbf{C}=\mathbf{c}). 
\end{align}
In other words, $Z$ is an instrument in $S=0$ if it is an instrument in $S=1$. Then \eqref{nc_equal} can be used as the basis of a valid falsification test for the IV assumptions in the population of interest. If we strengthen \eqref{shared_bias} to an `iff' statement and Proposition \ref{main_prop} to an `iff' result (under a faithfulness-type condition), then we can construct a consistent test. We note that the proposition allows for the distribution of $\mathbf{C}$ to differ between populations, but if the support of $\mathbf{C}$ differs between the standard and reference populations, one may not be able to draw conclusions for all individuals with $S=1$.

\subsection{Point identification}

We will focus on identification of the conditional average treatment effect on the treated in the population of interest:
\[E(Y^1-Y^0|A=1, Z=z, S=1,\mathbf{C}=\mathbf{c}).\]
We focus on conditional effects given that investigators may often be interested in subgroup-specific contrasts. If one is interested in the marginal treatment effect in the treated, our identification and estimation results are still useful since conditional effects can be standardized according to the covariate distributions in the treated arm:
\begin{align*}
E(Y^1-Y^0|A=1,S=1)=E\{E(Y^1-Y^0|A=1, Z, S=1,\mathbf{C})|A=1, S=1\}.
\end{align*}
Further assumptions are required to identify a causal effect. 

\begin{assumption}{Consistency:}\label{consist}
if $A=a$ and $S=1$, then $Y^a=Y$. 
\end{assumption}

\begin{assumption}{IV relevance:}\label{relevance} $\forall \mathbf{c}\in \mathcal{C}_0 \cap \mathcal{C}_1$,
\[E(A|Z=1,S=1,\mathbf{C}=\mathbf{c})-E(A|Z=0,S=1,\mathbf{C}=\mathbf{c})\neq 0.\]
\end{assumption}
\begin{assumption}{Partial population exchangeability:}\label{partial_pe} $\forall \mathbf{c}\in \mathcal{C}_0 \cap \mathcal{C}_1$,
\begin{align*}
&E(Y^0|Z=1,S=1,\mathbf{C}=\mathbf{c})-E(Y^0|Z=0,S=1,\mathbf{C}=\mathbf{c})\\
&=E(Y^0|Z=1,S=0,\mathbf{C}=\mathbf{c})-E(Y^0|Z=0,S=0,\mathbf{C}=\mathbf{c}).
\end{align*}
\end{assumption}

We note first that Assumption \ref{consist} is standard in causal inference.  Assumption \ref{relevance} demands that $Z$ is predictive of the exposure; although this assumption is similar to IV.1, $Z$ is not required to satisfy either IV.2 or IV.3. In particular, our assumptions also allow for $Z$ to be a standard confounder with a direct effect on the outcome (violating the exclusion restriction). When IV.2 and IV.3 hold \textit{in both populations}, we note that Assumption \ref{partial_pe} then automatically also holds. This is because one can show that under the IV conditions, $E(Y^0|Z=1,S=s,\mathbf{C}=\mathbf{c})-E(Y^0|Z=0,S=s,\mathbf{C}=\mathbf{c})=0$ $\forall \mathbf{c}\in \mathcal{C}_0 \cap \mathcal{C}_1$ and $s=0,1$. Assumption \ref{partial_pe} requires that the (conditional) additive association between the potential outcome $Y_0$ and $Z$ transports between the negative control population and the and population of interest. It is not generally testable. 
We discuss the plausibility of this assumption below, which is weaker than the full population exchangeability assumption made in the transportability literature \citep{dahabreh2019generalizing}.

\begin{remark}
To provide some intuition for Assumption \ref{partial_pe}, in \ref{sm:lsem} of the appendix we consider an example data-generating mechanism based on linear structural models. It indicates how although we are agnostic about whether core conditions IV.2 or IV.3 are violated, knowledge that IV.3 (unconfoundedness) holds could make Assumption \ref{partial_pe} more plausible. In that case, the association between $Z$ and $Y$ in the reference population could be interpreted as a direct causal effect; one may be on firmer grounds to transport a causal effect across populations.
\end{remark}

\begin{remark}
Note that Assumption \ref{partial_pe} is related to a more standard `additive equi-confounding'-type assumption:
 \begin{align} 
    &E(Y^0|A=1,Z=z,S=1,\mathbf{C}=\mathbf{c})-E(Y^0|A=1,Z=z,S=0,\mathbf{C}=\mathbf{c})\nonumber\\
    &=E(Y^0|A=0,Z=z,S=1,\mathbf{C}=\mathbf{c})-E(Y^0|A=0,Z=z,S=0,\mathbf{C}=\mathbf{c})\label{condition_pt}
    \end{align}
$\forall z \in \{0,1\}$ and $\forall \mathbf{c}\in \mathcal{C}_0 \cap \mathcal{C}_1$ \citep{sofer2016negative}, which contrasts $Y^0$ within levels of $A$. To interpret \eqref{condition_pt} in the standard difference-in-differences setting, $S$ plays the role of time and $Z$ is an arbitrary covariate. Assumption \ref{partial_pe} does not appear to be stronger nor weaker in general than \eqref{condition_pt}. Both can be motivated by assuming that the (conditional mean) association between an unmeasured confounder $U$ and $Y$ is constant across populations. We defer to \ref{sm:lsem} for a more precise discussion, but note that Assumption \ref{partial_pe} places restrictions on the (conditional mean) association between $Z$ and $Y_0$, with different restrictions on the conditional mean of $U$. It is not obvious how to leverage \eqref{condition_pt} when there is no variation in $A$ in the reference population (as in our data example). Furthermore, if many candidate $Z$s are collected, one can then choose between them to obtain identification. In contrast, \eqref{condition_pt} is restricted to holding for the treatment variable.
\end{remark}

\begin{remark}
Our assumptions are stated conditional on covariates $\mathbf{C}$. To ensure Assumption \ref{partial_pe} holds, we require that $\mathbf{C}$ includes all effect modifiers of the additive association between $Z$ and $Y^0$ that differ in distribution between populations. In terms of design implications, this suggests choosing a reference population that is similar to the population of interest in terms of characteristics that may modify the association between $Z$ and $Y^0$. As argued above, it is also advantageous to incorporate measured confounders for the effect of $Z$ on $Y$, as well as for the effect of $A$ on $Y$, since this may weaken the restrictions on $U$. 
\end{remark}

Assumptions \ref{ref}-\ref{partial_pe} will be sufficient to test the causal null hypothesis that $E(Y^1-Y^0|A=1,Z=z,S=1,\mathbf{C}=\mathbf{c})=0$, as formalised in \ref{sm:result_ht} of the appendix. However, they are insufficient to identify the causal effect of interest. Therefore we will posit two additional assumptions; only one is required to hold to yield identification: 
\begin{assumption}{No effect modification (NEM):} \label{pem} $\forall \mathbf{c}\in \mathcal{C}_0 \cap \mathcal{C}_1$
\[E(Y^1-Y^0|A=1,Z=1,S=1,\mathbf{C}=\mathbf{c})=E(Y^1-Y^0|A=1, Z=0,S=1,\mathbf{C}=\mathbf{c}).\]
\end{assumption}
\begin{assumption}{No selection modification (NSM):}\label{psb}
If we define the selection bias as \[\gamma(z,\mathbf{c}) \equiv E(Y^0|A=1,Z=z,S=1,\mathbf{C}=\mathbf{c})-E(Y^0|A=0,Z=z,S=1,\mathbf{C}=\mathbf{c})\]
 then $\gamma(1,\mathbf{c})=\gamma(0,\mathbf{c})$ $\forall \mathbf{c}\in \mathcal{C}_0 \cap \mathcal{C}_1$.
\end{assumption}
The first of these requires that the conditional average treatment effect on the treated does not depend on $Z$ \citep{hernan2006instruments},  whereas the second assumption requires that the selection bias is not a function of $Z$ \citep{tchetgen2013alternative}.  Then we are in a position to give the main theorem for identification. 
\begin{theorem}\label{iden}
Under Assumptions \ref{ref} and \ref{partial_pe}, $t(z,\mathbf{c})\equiv E(Y^0|Z=z,S=1,\mathbf{C}=\mathbf{c})-E(Y^0|Z=0,S=1,\mathbf{C}=\mathbf{c})$ is identified as 
\begin{align*}
t(z,\mathbf{c})
&=E(Y|Z=z,S=0,\mathbf{C}=\mathbf{c})-E(Y|Z=0,S=0,\mathbf{C}=\mathbf{c})
\end{align*}
$\forall z \in \{0,1\}$ and $\forall \mathbf{c}\in \mathcal{C}_0 \cap \mathcal{C}_1$. Further assuming Assumptions \ref{consist} and \ref{relevance}, if Assumption \ref{pem} holds, then
\begin{align*}
&E(Y^1-Y^0|A=1,S=1,\mathbf{C}=\mathbf{c})\\
&=\frac{E(Y|Z=1,S=1,\mathbf{C}=\mathbf{c})-E(Y|Z=0,S=1,\mathbf{C}=\mathbf{c})-t(1,\mathbf{c})}{E(A|Z=1,S=1,\mathbf{C}=\mathbf{c})-E(A|Z=0,S=1,\mathbf{C}=\mathbf{c})}.
\end{align*}
Alternatively, if Assumption \ref{psb} also holds, then
\begin{align*}
&E(Y^1-Y^0|A=1, Z=z,S=1,\mathbf{C}=\mathbf{c})\\
&=E(Y|A=1,Z=z,S=1,\mathbf{C}=\mathbf{c})-E(Y|A=0,Z=z,S=1,\mathbf{C}=\mathbf{c})\\
&\quad +\frac{E(Y|A=0,Z=1,S=1,\mathbf{C}=\mathbf{c})-E(Y|A=0,Z=0,S=1,\mathbf{C}=\mathbf{c})-t(1,\mathbf{c})}{E(A|Z=1,S=1,\mathbf{C}=\mathbf{c})-E(A|Z=0,S=1,\mathbf{C}=\mathbf{c})}.
\end{align*}
\end{theorem}
A proof is given in \ref{sm:proof_iden} of the appendix. Note that the identification functional under NEM amounts to the conditional Wald estimand \citep{wang2018bounded}, after transforming the outcome by subtracting $t(z,\mathbf{C})$. Essentially, the bias (learnt in the reference population) is subtracted from the numerator of the Wald estimand, in a similar fashion to what is done in difference-in-differences. 

\subsection{Negative control outcomes}

One can also harness measurements of a negative control outcome in order to generate a reference population. Suppose that there is now only data on individuals from the population of interest, such that $S=1$ for everyone (we omit this from the conditioning statement to simplify notation). Further, suppose that one collects a random variable $W$ in the population of interest that satisfies the following assumption 
\begin{align}\label{ref_difference-in-differences}
E(W|Z=z,\mathbf{C}=\mathbf{c})=E(W^0|Z=z,\mathbf{C}=\mathbf{c})
\end{align}
$\forall z\in \{0,1\}$ and $\forall \mathbf{c}$ in the support of $\mathbf{C}$ \citep{danieli2022negative}. Intuitively, the observed outcome for $W$ is the same outcome as would be observed under no exposure; this is plausible when scientific expertise suggests there cannot be any exposure effect of $A$ on $W$. For example, $W$ could be an outcome occurring before the individual could be exposed. Note that \eqref{ref_difference-in-differences} restricts the effect of $A$ on the negative control outcome; so far, nothing is assumed about $Z$ being a valid instrument w.r.t $W$.   

If we additionally assume condition that
\begin{align}\label{partial_pe_difference-in-differences}
&E(Y^0|Z=1,\mathbf{C}=\mathbf{c})-E(Y^0|Z=0,\mathbf{C}=\mathbf{c})\nonumber\\&=E(W^0|Z=1,\mathbf{C}=\mathbf{c})-E(W^0|Z=0,\mathbf{C}=\mathbf{c})
\end{align}
then it follows from Theorem \ref{iden} that one can identify the effect $E(Y^1-Y^0|A=1,Z=z,\mathbf{C}=\mathbf{c})$ by leveraging the (conditional) association between the negative control outcome $W$ and $Z$.  Specifically,  if we assume that the causal effect does not depend on $Z$, 
\begin{align}\label{bsiv_difference-in-differences1}
&E(Y^1-Y^0|A=1, Z=z,\mathbf{C}=\mathbf{c})=\frac{E(Y-W|Z=1,\mathbf{c})-E(Y-W|Z=0,\mathbf{c})}{E(A|Z=1,\mathbf{c})-E(A|Z=0,\mathbf{c})}.
\end{align}
In \ref{sm:nco_id} of the appendix, we provide a parallel result when restricting the dependence of the selection bias on $Z$. 

In the case that $W$ is a pre-exposure outcome, \citet{ye2020instrumented} arrive at an  expression similar to (\ref{bsiv_difference-in-differences1}), although they consider a setting where two measurements of the exposure (in addition to the outcome) are available. In the denominator of their expression, the contrast involves the difference in the exposure measurements. In addition,  \citet{ye2020instrumented} target a different causal estimand and their identification results are separate from ours; see also \citet{richardson2023generalized}. Their identification functional also appears in the econometrics literature on `fuzzy' difference-in-differences designs \citep{de2018fuzzy}, where it is usually interpreted under separate assumptions as a local average treatment effect. 

\section{Estimation and inference}\label{sec:estimation}

\subsection{Semiparametric Theory}

In this section,  we will first develop a novel semiparametric efficiency theory for the conditional treatment effect $E(Y^{\mathbf{a}}-Y^{\mathbf{0}}|\mathbf{A}=\mathbf{a},\mathbf{Z}=\mathbf{z},S=1,\mathbf{C}=\mathbf{c})=\beta(\mathbf{a,z,c})$. We allow for $\mathbf{A}$, $\mathbf{Z}$ and $\mathbf{C}$ to be continuous or discrete and potentially vector-valued. Our results are distinct from those developed previously in the instrumental variable literature \citep{robins1994correcting,tchetgen2013alternative}. We note that our model parametrizes treatment effects in different populations; $\beta(\mathbf{a,z,c})$ may not equal $\beta(\mathbf{a',z,c})$ if $\mathbf{a}\neq\mathbf{a'}$. See \citet{robins1994correcting} for details on the parametrisation of structural mean models.

We postulate a semiparametric model 
\begin{align}\label{mod_res}
\beta(\mathbf{a,z,c})=\beta(\mathbf{a,z,c};\psi),
\end{align} where  $\psi$ is an unknown finite-dimensional parameter vector with true value $\psi^\dagger$. Then the previous NEM assumption can be generalised as:
\begin{assumption}{No effect modification (NEM):}\label{pem_gen} with probability 1, 
\[\beta(\mathbf{A,Z,C})=\beta(\mathbf{A,C}).\]
\end{assumption} We will let $\mathcal{M}_{NEM}$ denote the model defined by restriction (\ref{mod_res}) along with Assumptions \ref{pem_gen} and \ref{ref_gen}-\ref{partial_pe_gen}, (see \ref{sm:add_assum} of the appendix), which are 
generalised versions of Assumptions \ref{ref}-\ref{partial_pe}. To simplify the exposition, we work under a NEM assumption in the main manuscript, but parallel results under an NSM assumption are developed in \ref{sm:nsm} of the appendix.

Introducing some notation, let us redefine $t(\mathbf{z,C})\equiv E(Y^{\mathbf{0}}|\mathbf{Z=z},S=1,\mathbf{C})-E(Y^{\mathbf{0}}|\mathbf{Z=0},S=1,\mathbf{C})$.
Then, we define the residual
\begin{align*}
\epsilon^*&\equiv Y-\beta(\mathbf{A,C})S-t(\mathbf{Z,C})-b_{1}(\mathbf{C})S-b_{0}(\mathbf{C})
\end{align*}
where $t(\mathbf{Z,C})$ is identified as
\begin{align*}
&t(\mathbf{z,C})=E(Y|\mathbf{Z=z},S=0,\mathbf{C})-E(Y|\mathbf{Z=0},S=0,\mathbf{C})
\end{align*}
under \ref{consist_gen}-\ref{relevance_gen}. Also,
\begin{align*}
&b_1(\mathbf{C})\equiv E\{Y-\beta(\mathbf{A,C})S|\mathbf{Z=0},S=1,\mathbf{C}\}-E(Y-\beta(\mathbf{A,C})S|\mathbf{Z=0},S=0,\mathbf{C})\\
&b_0(\mathbf{C})\equiv E(Y|\mathbf{Z=0},S=0,\mathbf{C}).
\end{align*}.
We use the notation $\epsilon^*(\psi^\dagger)$ when $\beta(\mathbf{A,C};\psi^\dagger)$ replaces $\beta(\mathbf{A,C})$ in $\epsilon^*$.

\begin{theorem}\label{if_space}
Under the semiparametric model $\mathcal{M}_{NEM}$, the orthocomplement to the nuisance tangent space is given as
\begin{align*}
\bigg\{& \phi(\mathbf{Z},S,\mathbf{C})\epsilon_i^*(\psi^\dagger):\phi(\mathbf{Z},S,\mathbf{C})\in\Omega\bigg\}\cap L^0_2
\end{align*}
where \[\Omega\equiv\{\phi(\mathbf{Z},S,\mathbf{C}):E\{\phi(\mathbf{Z},S,\mathbf{C})|\mathbf{Z,C}\}=E\{\phi(\mathbf{Z},S,\mathbf{C})|S,\mathbf{C}\}=0\}\]
and $L_2^0$ is the Hilbert space of zero mean, finite variance functions with dimension $dim(\psi^\dagger)$.
\end{theorem}
A proof is given in \ref{sm:proof_if} of the appendix. As reviewed e.g. in \citet{bickel1993efficient}, by deriving the orthocomplement of the nuisance tangent space, we obtain the class of influence functions of all regular and asymptotically linear (RAL) estimators of $\psi^\dagger$. In turn, knowing this class motivates the construction of RAL estimators e.g. by solving estimating equations based on a chosen influence function. Specifically, under  Assumption \ref{pem_gen}, we can estimate $\psi^\dagger$ as the solution to the equations 
\begin{align*}
0=\sum^n_{i=1}\phi(\mathbf{Z}_i,S_i,\mathbf{C}_i)\epsilon_i^*(\psi^\dagger)
\end{align*}
for a chosen $\phi(\mathbf{Z},S,\mathbf{C})$ that satisfies the above restrictions.

Converting the above result into an estimation strategy requires a choice of $\phi(\mathbf{Z},S,\mathbf{C})$.  
In order to give a closed-form representation of $\Omega$,  we first provide a definition.
\begin{definition}{Admissible Independence Density \citep{tchetgen2010doubly}}.
Consider three potentially vector-valued random variables $\mathbf{X}_1$, $\mathbf{X}_2$ and $\mathbf{X}_3$. 
Let $f^\ddagger(\mathbf{X}_1,\mathbf{X}_2|\mathbf{X}_3)=f^\ddagger(\mathbf{X}_1|\mathbf{X}_3)f^\ddagger(\mathbf{X}_2|\mathbf{X}_3)$ denote a fixed density that makes $\mathbf{X}_1$ and $\mathbf{X}_2$ conditionally independent given $\mathbf{X}_3$. Then $f^\ddagger(\mathbf{X}_1,\mathbf{X}_2|\mathbf{X}_3)$ is an admissible independence density if it is absolutely continuous with respect to the true joint law $f(\mathbf{X}_1,\mathbf{X}_2|\mathbf{X}_3)$ with probability 1.
\end{definition}

It follows from \citet{tchetgen2010doubly} that, for the admissible independence density $f^\ddagger(\mathbf{Z},S|\mathbf{C})$ we have
\begin{align*}
\Omega=&\bigg\{\frac{f^\ddagger(\mathbf{Z},S|\mathbf{C})}{f(\mathbf{Z},S|\mathbf{C})}[r_0(\mathbf{Z},S,\mathbf{C})-E^\ddagger\{r_0(\mathbf{Z},S,\mathbf{C})|\mathbf{Z,C}\}\nonumber\\&-E^\ddagger\{r_0(\mathbf{Z},S,\mathbf{C})|S,\mathbf{C}\}+E^\ddagger\{r_0(\mathbf{Z},S,\mathbf{C})|\mathbf{C}\}]:\textrm{$r_0(\mathbf{Z},S,\mathbf{C})$ unrestricted}\bigg\}
\end{align*}
 where $E^\ddagger(\cdot|\cdot)$ denotes a (conditional) expectation taken with respect to $f^\ddagger(\cdot|\cdot)$. To construct an estimator, one can choose any $f^\ddagger(\mathbf{Z},S|\mathbf{C})$ that satisfies the above definition. The optimal choice of $r_0(\mathbf{Z},S,\mathbf{C})$ for efficiency can nevertheless depend on the choice of $f^\ddagger(\mathbf{Z},S|\mathbf{C})$. But there is no requirement to select the true densities, as the following example shows.
\begin{example}\label{bin_bin}
Let $\mathbf{Z}$ be binary, and suppose one sets $f^\ddagger(Z=1|\mathbf{C})=f^\ddagger(S=1|\mathbf{C})=0.5$ and chooses $r_0(Z,S,\mathbf{C})=16(Z-0.5)(S-0.5)m(\mathbf{C})$. 
Then one can estimate $\psi$ via the unbiased estimating function
\begin{align}\label{binary_ee}
\frac{m(\mathbf{C})(-1)^{Z+S}}{f(Z,S|\mathbf{C})}\epsilon^*(\psi^\dagger)
\end{align}
Equation \eqref{binary_ee} suggests a simple and practical choice of estimating function for binary $Z$, which we use in our simulations and data analysis.
\end{example}

\subsection{De-biased machine learning and multiply robust estimation}\label{sec_mr}

The estimation strategies previously described are not generally feasible, because the estimating equations for $\psi^\dagger$ involve nuisance parameters that are typically unknown. One approach to deal with this is to plug in estimates obtained from nonparametric estimators or flexible statistical learning methods \citep{chernozhukov2018double}. In \ref{sm:cross_fit} of the appendix, we therefore use Theorem \ref{if_space} to construct cross-fit de-biased machine learning-based estimators of $\psi^\dagger$, and describe sandwich estimators of the asymptotic variance. Nevertheless, to develop a more nuanced understanding of the different bias properties of RAL estimators of $\psi^\dagger$, we will instead consider a scenario where parametric working models are used for the conditional expectations/densities that arise in estimating the causal effect of interest. We hence conceptualise bias as potentially arising from parametric model misspecification. The parametric approach will also suggest how nuisance functions should be estimated nonparametrically for optimal performance in terms of their convergence rates.

Let us define the following parametric models $t(\mathbf{Z,C};\nu^{\dagger})$ for $t(\mathbf{Z,C})$, $b_0(\mathbf{C};\theta^{\dagger}_0)$ for $b_0(\mathbf{C})$ and $b_1(\mathbf{C};\theta^{\dagger}_1)$ for $b_1(\mathbf{C})$ that are required to model the conditional outcome mean for both the $S=1$ and $S=0$ populations. Here, $t(\mathbf{Z,C};\nu^{\dagger})$, $b_0(\mathbf{C};\theta^{\dagger}_0)$ and $b_1(\mathbf{C};\theta^{\dagger}_1)$ are known functions, which are smooth in $\nu^{\dagger}$, $\theta^{\dagger}_0$ and $\theta^{\dagger}_1$ respectively. We will sometimes use the notation $\theta^\dagger=(\theta^{\dagger^T}_1,\theta^{\dagger^T}_0)^T$. 
For modelling the joint conditional density $f(\mathbf{Z},S|\mathbf{C})$, we will make use of the following parametrisation based on the generalised odds ratio function 
\[OR(\mathbf{Z},S,\mathbf{C})=\frac{f(\mathbf{Z}|S,\mathbf{C})f(\mathbf{Z}=\mathbf{z_0}|S={s_0},\mathbf{C})}{f(\mathbf{Z}=\mathbf{z_0}|S,\mathbf{C})f(\mathbf{Z}|S={s_0},\mathbf{C})}\]
\citep{tchetgen2010doubly}. The reference values of $\mathbf{z_0}$ and $s_0$ are user-specified; in what fallows, we will use $\mathbf{z_0}=\mathbf{0}$ and $s_0=0$ as a generic notation. Then by specifying $OR(\mathbf{Z},S,\mathbf{C})$, $f(\mathbf{Z}|S=0,\mathbf{C})$ and $f(S|\mathbf{Z}=\mathbf{0},\mathbf{C})$, one can generate $f(\mathbf{Z},S|\mathbf{C})$ as 
\[f(\mathbf{Z},S|\mathbf{C})=\frac{OR(\mathbf{Z},S,\mathbf{C})f(\mathbf{Z}|S=0,\mathbf{C})f(S|\mathbf{Z}=\mathbf{0},\mathbf{C})}{\int \int OR(\mathbf{z},s,\mathbf{C})f(\mathbf{z}|S=0,\mathbf{C})f(s|\mathbf{Z}=\mathbf{0},\mathbf{C})\mathbf{dz}ds}\]
\citep{yun2007semiparametric} as well as $f(\mathbf{Z}|S,\mathbf{C})$ and $f(S|\mathbf{Z},\mathbf{C})$.
This parametrisation of the joint density will be crucial in constructing estimators with superior robustness properties. We postulate smooth parametric models $f(\mathbf{Z}|S=0,\mathbf{C};\tau^\dagger)$, $f(S|\mathbf{Z}=\mathbf{0},\mathbf{C};\alpha^\dagger)$ and $OR(\mathbf{Z},S,\mathbf{C};\rho)$ for $f(\mathbf{Z}|S=0,\mathbf{C})$, $f(S|\mathbf{Z}=\mathbf{0},\mathbf{C})$ and $OR(\mathbf{Z},S,\mathbf{C})$ respectively. We note that $\nu^\dagger$, $\theta^\dagger$, $\tau^\dagger$, $\alpha^\dagger$ and $\rho^\dagger$ are all finite dimensional parameters.

Consider the following sets of restrictions on the observed data distribution:
\begin{itemize}
\item $\mathcal{M}_1$: $t(\mathbf{Z,C})=t(\mathbf{Z,C};\nu^{\dagger})$, $b_0(\mathbf{C})=b_0(\mathbf{C};\theta^{\dagger}_0)$ and $b_1(\mathbf{C})=b_1(\mathbf{C};\theta^{\dagger}_1)$. 
\item $\mathcal{M}_2$: $f(\mathbf{Z}|S=0,\mathbf{C})=f(\mathbf{Z}|S=0,\mathbf{C};\tau^\dagger)$, $OR(\mathbf{Z},S,\mathbf{C})=OR(\mathbf{Z},S,\mathbf{C};\rho^\dagger)$\\ and $t(\mathbf{Z,C})=t(\mathbf{Z,C};\nu^{\dagger})$.
\item $\mathcal{M}_3$: $f(S|\mathbf{Z}=\mathbf{0},\mathbf{C})=f(S|\mathbf{Z}=\mathbf{0},\mathbf{C};\alpha^\dagger)$, $OR(\mathbf{Z},S,\mathbf{C})=OR(\mathbf{Z},S,\mathbf{C};\rho^\dagger)$, and $b_1(\mathbf{C})=b_1(\mathbf{C};\theta^{\dagger}_1)$.
\item $\mathcal{M}_4$: $f(\mathbf{Z}|S=0,\mathbf{C})=f(\mathbf{Z}|S=0,\mathbf{C};\tau^\dagger)$, $f(S|\mathbf{Z}=\mathbf{0},\mathbf{C})=f(S|\mathbf{Z}=\mathbf{0},\mathbf{C};\alpha^\dagger)$ and 
$OR(\mathbf{Z},S,\mathbf{C})=OR(\mathbf{Z},S,\mathbf{C};\rho^\dagger)$.
\end{itemize}
At one extreme, $\mathcal{M}_1$ requires a correct conditional mean model for outcome after removing the treatment effect: $E\{Y-\beta(\mathbf{A,C})S|\mathbf{Z},S,\mathbf{C}\}$. At the other, $\mathcal{M}_4$ requires a correct model for the joint conditional density $f(\mathbf{Z},S|\mathbf{C})$. However, we allow for certain combinations of restrictions on these laws. If our model for $f(S|\mathbf{Z},\mathbf{C})$ is correct, then under $\mathcal{M}_3$, assumptions on the impact of $\mathbf{Z}$ on the conditional mean of $Y$ given $\mathbf{C}$ in the reference population can be relaxed.

Our goal is to construct an estimator that is \textit{quadruply robust}, that is, unbiased if one of these four restrictions on the observed data (in addition to model $\mathcal{M}_{NEM}$) holds. To achieve this, in \ref{sm:est_strat} in the appendix we describe an estimation strategy for $\psi^\dagger$, which relies on estimators of $\rho^\dagger$, $\nu^\dagger$ and $\theta_1^\dagger$ that are themselves doubly robust. This means that we can obtain an unbiased estimator of $t(\mathbf{Z,C})$ even when $b_1(\mathbf{C})$ is misspecified (and vice versa). Likewise, we can obtain an unbiased estimate of $f(\mathbf{Z}|S,C)$ even when $f(S|\mathbf{Z}=\mathbf{0},C)$ is poorly modelled. Indeed, quadruple (rather than triple) robustness hinges on the specific parametrisation of the joint density $f(\mathbf{Z},S|\mathbf{C})$ using odds ratios.

\begin{theorem}\label{triply}
Under the union model $\mathcal{M}_{NEM}\cap(\mathcal{M}_1\cup \mathcal{M}_2 \cup\mathcal{M}_3\cup \mathcal{M}_4)$ and assuming standard regularity conditions hold, $\hat{\psi}_{MR-NEM}$ is a consistent and asymptotically normal (CAN) estimator of $\psi^\dagger$.
\end{theorem}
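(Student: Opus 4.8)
The plan is to treat the entire five-step procedure as a single stacked $m$-estimation (Z-estimation) problem in the parameter vector $(\tau^\dagger,\alpha^\dagger,\rho^\dagger,\nu^\dagger,\theta^\dagger,\psi^\dagger)$ and invoke standard theory: under the stated regularity conditions the stacked estimators converge jointly to the roots of the population estimating equations and are jointly asymptotically normal, with a sandwich/influence-function variance that already accounts for the sequential estimation of the nuisances. Consistency and asymptotic normality of $\hat{\psi}_{MR-NEM}$ then reduce to a single requirement, namely that the final estimating function is unbiased at the truth when the nuisances are evaluated at their probability limits (call them $\tau^*,\alpha^*,\rho^*,\nu^*,\theta^*$, which need not equal the true values under misspecification):
\[
E\{\phi(\mathbf{Z},S,\mathbf{C};\tau^*,\alpha^*,\rho^*)\,\epsilon^*(\psi^\dagger,\nu^*,\theta^*)\}=0 .
\]
Thus the substance of the proof is to verify this identity under each of $\mathcal{M}_1,\dots,\mathcal{M}_4$.

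First I would compute the conditional mean of the outcome residual at $\psi^\dagger$. Using the outcome-conditional-mean parametrization of Section 2 together with Assumptions \ref{consist}--\ref{partial_pe} (in particular $t(\mathbf{Z,C})=E(Y|\mathbf{Z},S=0,\mathbf{C})-E(Y|\mathbf{Z}=\mathbf{0},S=0,\mathbf{C})$), one obtains the clean decomposition
\[
E\{\epsilon^*(\psi^\dagger,\nu^*,\theta^*)\mid \mathbf{Z},S,\mathbf{C}\}=\Delta t(\mathbf{Z,C})+\Delta b_0(\mathbf{C})+S\,\Delta b_1(\mathbf{C}),
\]
where $\Delta t=t(\mathbf{Z,C})-t(\mathbf{Z,C};\nu^*)$, $\Delta b_0=b_0(\mathbf{C})-b_0(\mathbf{C};\theta_0^*)$ and $\Delta b_1=b_1(\mathbf{C})-b_1(\mathbf{C};\theta_1^*)$ are the outcome-model errors. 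Conditioning, the bias of the final equation splits into three pieces,
\begin{align*}
E\{\phi\,\Delta t(\mathbf{Z,C})\}&=E\{\Delta t(\mathbf{Z,C})\,E(\phi\mid\mathbf{Z,C})\},\\
E\{\phi\,\Delta b_0(\mathbf{C})\}&=E\{\Delta b_0(\mathbf{C})\,E(\phi\mid\mathbf{C})\},\\
E\{\phi\,S\,\Delta b_1(\mathbf{C})\}&=E\{\Delta b_1(\mathbf{C})\,E(S\phi\mid\mathbf{C})\}.
\end{align*}
Next I would pin down the two centering properties of $\phi$ under possible misspecification, which follow from the admissible-independence-density representation (\ref{aid_rep_zs}): $E(\phi\mid\mathbf{Z,C})=0$ holds precisely when the conditional model for $f(S\mid\mathbf{Z},\mathbf{C})$ is correct (i.e. $\alpha,\rho$ correct), whereas $E(\phi\mid S,\mathbf{C})=0$ holds precisely when $f(\mathbf{Z}\mid S,\mathbf{C})$ is correct (i.e. $\tau,\rho$ correct); note also that either centering forces $E(\phi\mid\mathbf{C})=0$ and that $S$-centering forces $E(S\phi\mid\mathbf{C})=0$. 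Consequently the first piece vanishes under $\mathbf{Z}$-centering or $\Delta t=0$, the second under either centering or $\Delta b_0=0$, and the third under $S$-centering or $\Delta b_1=0$.

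The remaining, and most delicate, step is to cross-reference these requirements with the actual probability limits of the nuisances under each sub-model, which hinges on the internal double robustness of $\hat\rho$, $\hat\nu$ and $\hat\theta_1$. Invoking the doubly-robust odds-ratio construction of \citet{tchetgen2010doubly}, $\hat\rho\to\rho^\dagger$ whenever the $OR$ model is correct and at least one of $f(\mathbf{Z}\mid S=0,\mathbf{C};\tau)$, $f(S\mid\mathbf{Z}=\mathbf{0},\mathbf{C};\alpha)$ is correct; by the centered reference-population equation of \citet{robins1992estimating}, $\hat\nu\to\nu^\dagger$ under a correct $t$-model provided either $b_0$ or $\tau$ is correct; and by the $S$-residual weighting in Step 4, $\hat\theta_1\to\theta_1^\dagger$ under a correct $b_1$-model provided either the outcome nuisances $(t,b_0)$ or the $S$-model $(\alpha,\rho)$ are correct. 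Assembling these: $\mathcal{M}_1$ gives $\Delta t=\Delta b_0=\Delta b_1=0$ directly; $\mathcal{M}_4$ makes both centerings hold; $\mathcal{M}_2$ ($\tau,OR,t$ correct) yields $S$-centering and $\Delta t=0$; and $\mathcal{M}_3$ ($\alpha,OR,b_1$ correct) yields $\mathbf{Z}$-centering and $\Delta b_1=0$. In each case all three bias pieces are annihilated, establishing the unbiasedness identity and hence the CAN property. I expect the main obstacle to be exactly this last bookkeeping: because the estimation is sequential and the derived estimators $\hat\rho,\hat\nu,\hat\theta_1$ are themselves only doubly robust, verifying that their partial-misspecification limits behave as claimed (and that the sequential plug-ins $\hat\tau,\hat\alpha,\hat\theta_0$ feeding into them do not break these double-robustness guarantees) is where the genuine work lies, the rest being standard $m$-estimation asymptotics.
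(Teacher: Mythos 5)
Your proposal is correct and takes essentially the same route as the paper's own proof: the paper likewise reduces the claim (under standard regularity conditions) to the unbiasedness of $E\{\phi(\mathbf{Z},S,\mathbf{C};\tau^\dagger,\alpha^\dagger,\rho^\dagger)\epsilon^*(\psi^\dagger,\nu^\dagger,\theta^\dagger)\}$ at the nuisance probability limits, splits the residual bias into the $\Delta t(\mathbf{Z,C})$, $\Delta b_0(\mathbf{C})$ and $S\,\Delta b_1(\mathbf{C})$ pieces, and annihilates them case by case via iterated expectations using the two conditional centerings of $\phi$ under the odds-ratio parametrisation together with the double robustness of $\hat{\rho}$, $\hat{\nu}$ and $\hat{\theta}_1$ from \citet{tchetgen2010doubly} and \citet{robins1992estimating}. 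Your three-piece/two-centering bookkeeping is simply a more systematic presentation of the paper's case-by-case computations under $\mathcal{M}_1$ through $\mathcal{M}_4$.
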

A proof is given in \ref{sm:proof_triply} of the appendix. A nonparametric estimator of the standard error for $\hat{\psi}_{MR-NEM}$ can be obtained either using a sandwich estimator, following standard M-estimation theory. Alternatively, one can use the nonparametric bootstrap. Suitable regularity conditions can be found e.g. in Appendix B of \citet{robins1994estimation}. In \ref{sec_lse} of the appendix, we obtain the semiparametric efficiency bound and the optimal choice of $m(\mathbf{C})$.

Our estimation theory has focused the reference population setting, to avoid repetition and because our estimators simplify in the case of negative control outcomes; see \ref{sm:nc_out} of the appendix. In this case, the resulting estimators simplify further and are closely related to the $g$-estimators of \citet{robins1994correcting}. They are doubly rather than quadruply robust, as fewer nuisance parameters are required to be estimated. Specifically, under NEM either the conditional mean models involving $\mathbf{Z}$ or $Y-W$ need to be correctly specified, in addition to the semiparametric structural model. In \ref{sm:att} we develop a semiparametric estimation theory for the marginal average treatment effect in the treated.

\section{Data analysis}
The Life Span Study of atomic bomb survivors in Japan has been influential among in understanding the impact of exposure to high levels of radiation on long-term health outcomes. However, initial analyses overlooked the potential role of confounding; it has been hypothesised that certain socioeconomic factors were associated both with location at the time of the bombing and the longer-term risk of cancer \citep{richardson2012lessons}. The dataset contains limited information on  potential confounders. On the other hand, there was data collected on residents who were not present in these cities at the time of the bombings, which forms a natural reference population.

We used a sample of 8,463 survivors who were 45-49 years old at the time of the bombings and who were followed up to December 31, 2000. These individuals were residents of Hiroshima or Nagasaki; out of the this sample, 1,787 people were away from the cities at the time of the bombing. A measure of high vs. low exposure to prompt radiation was based on weighted DS02 colon dose estimates, expressed as the weighted dose in gray (Gy); those in the reference population were assumed to have dose estimates equal to 0 Gy. For those who were present in the cities, an individual was considered to have had a high level of exposure if their dose estimates were above the median. The outcome of interest was age at death, measured in years. 

We chose city of residence as a potential instrument, since it was strongly predictive of the exposure and it seemed plausible that the its association with age at death under low radiation exposure could be stable across the two populations. However, we may not expect it to satisfy the exclusion restriction nor unconfoundedness.
We adjusted for sex as a covariate. We compared six estimators, described further in \ref{sm:add_est} of the appendix:
\begin{itemize}
    \item $\hat{\psi}_{TSLS}$: a two-stage least squares estimator that is unbiased under $\mathcal{M}_{NEM}\cap \mathcal{M}_1$.
    \item $\hat{\psi}_{g-Z}$: a $g$-estimator that is unbiased under $\mathcal{M}_{NEM}\cap \mathcal{M}_2$.
        \item $\hat{\psi}_{g-S}$: a $g$-estimator that is unbiased under $\mathcal{M}_{NEM}\cap \mathcal{M}_3$.
    \item $\hat{\psi}_{IPW}$: an inverse probability weighted estimator that is unbiased under $\mathcal{M}_{NEM}\cap \mathcal{M}_4$. 
     \item $\hat{\psi}_{MR}$: the multiply-robust estimator described in Section \ref{sec_mr} with $m(C)=1$
    \item  $\hat{\psi}_{MR-eff}$: the multiply-robust estimator with $m(C)$ set to the efficient choice.
\end{itemize}
Since both variables were binary in this case, it was possible to carry out a fully nonparametric analysis. Nevertheless, a treatment effect under NEM was chosen which excluded an interaction between $A$ and $C$. Further, to evaluate sensitivity to model misspecification, we considered five different model specifications for the nuisance parameters:
\begin{enumerate}
    \item All working  models for the nuisance parameters were saturated. 
    \item The models $OR(Z,S,C;\rho^\dagger)$ and  $t(Z,C;\nu^{\dagger})$ excluded interactions between $Z$ and $C$.
    \item The models $OR(Z,S,C;\rho^\dagger)$, $b_1(C;\theta^{\dagger}_1)$ excluded interactions between $Z$ and $C$.
    \item The models  $t(Z,C;\nu^{\dagger})$, $b_1(C;\theta^{\dagger}_1)$ excluded an interaction between $Z$ and $C$.
    \item All working models for the nuisance parameters excluded interactions between $Z$ and $C$.
\end{enumerate}
Varying the fitted models in this way enabled us to check the robustness of the estimators $\hat{\psi}_{TSLS}$, $\hat{\psi}_{g-Z}$, $\hat{\psi}_{g-S}$, $\hat{\psi}_{IPW}$, $\hat{\psi}_{MR}$ and $\hat{\psi}_{MR-eff}$ to departures from a less restrictive model. We note that if all true underlying models for the nuisance parameters include interactions, then in theory one can only construct estimators that are unbiased under the first and fourth specifications. To check sensitivity, we calculated the change in each estimator from estimates based on the first specification (saturated models), then scaling by the latter estimates.

\begin{table}
\caption{\label{tab_DA1}
Results from the Life Span study for the conditional effect in the treated. 95\% CI: 95\% confidence interval; \% Change: absolute percentage change from estimate based on saturated nuisance models.}
\centering
\begin{tabular}{lllll}
  \hline
Specification & Estimator & Estimate & $95\%$ CI & $\%$ Change \\ 
  \hline
All models saturated & $\hat{\psi}_{TSLS}$ & -1.81 & (-5.32,1.69) & 0 \\ 
   & $\hat{\psi}_{g-Z}$ & -1.81 & (-5.32,1.69) & 0 \\ 
   & $\hat{\psi}_{g-S}$ & -1.87 & (-5.38,1.63) & 0 \\ 
   & $\hat{\psi}_{IPW}$ & -1.82 & (-5.32,1.67) & 0 \\ 
   & $\hat{\psi}_{MR}$ & -1.82 & (-5.32,1.67) & 0 \\ 
   & $\hat{\psi}_{MR-eff}$ & -1.87 & (-5.37,1.64) & 0 \\ 
  $OR(Z,S,C;\rho^\dagger)$, $t(Z,C;\nu^{\dagger})$ restricted & $\hat{\psi}_{TSLS}$ & -1.82 & (-5.32,1.67) & 0.01 \\ 
   & $\hat{\psi}_{g-Z}$ & -2.35 & (-5.89,1.2) & 0.3 \\ 
   & $\hat{\psi}_{g-S}$ & -2.37 & (-5.92,1.18) & 0.26 \\ 
   & $\hat{\psi}_{IPW}$ & -4.04 & (-8.44,0.36) & 1.21 \\ 
   & $\hat{\psi}_{MR}$ & -1.82 & (-5.31,1.66) & $<$0.01 \\ 
   & $\hat{\psi}_{MR-eff}$ & -1.85 & (-5.34,1.65) & 0.01 \\ 
  $OR(Z,S,C;\rho^\dagger)$, $b_1(C;\theta^{\dagger}_1)$ restricted & $\hat{\psi}_{TSLS}$ & -1.82 & (-5.33,1.68) & 0.01 \\ 
   & $\hat{\psi}_{g-Z}$ & -6.29 & (-14.3,1.71) & 2.47 \\ 
   & $\hat{\psi}_{g-S}$ & -2.37 & (-5.92,1.18) & 0.27 \\ 
   & $\hat{\psi}_{IPW}$ & -4.04 & (-8.44,0.36) & 1.21 \\ 
   & $\hat{\psi}_{MR}$ & -1.84 & (-5.34,1.67) & 0.01 \\ 
   & $\hat{\psi}_{MR-eff}$ & -1.86 & (-5.35,1.64) & $<$0.01 \\ 
  $t(Z,C;\nu^{\dagger})$, $b_1(C;\theta^{\dagger}_1)$ restricted & $\hat{\psi}_{TSLS}$ & -1.83 & (-5.33,1.66) & 0.01 \\ 
   & $\hat{\psi}_{g-Z}$ & -1.82 & (-5.32,1.67) & 0.01 \\ 
   & $\hat{\psi}_{g-S}$ & -1.81 & (-5.31,1.69) & 0.03 \\ 
   & $\hat{\psi}_{IPW}$ & -1.82 & (-5.32,1.67) & 0 \\ 
   & $\hat{\psi}_{MR}$ & -1.82 & (-5.32,1.67) & 0 \\ 
   & $\hat{\psi}_{MR-eff}$ & -1.87 & (-5.37,1.64) & 0 \\ 
  All models restricted & $\hat{\psi}_{TSLS}$ & -1.83 & (-5.33,1.66) & 0.01 \\ 
   & $\hat{\psi}_{g-Z}$ & -2.35 & (-5.89,1.2) & 0.3 \\ 
   & $\hat{\psi}_{g-S}$ & -2.37 & (-5.92,1.18) & 0.27 \\ 
   & $\hat{\psi}_{IPW}$ & -4.04 & (-8.44,0.36) & 1.21 \\ 
   & $\hat{\psi}_{MR}$ & -1.84 & (-5.33,1.65) & 0.01 \\ 
   & $\hat{\psi}_{MR-eff}$ & -1.86 & (-5.36,1.63) & $<$0.01 \\ 
   \hline
\end{tabular}
\end{table}

The results of the analysis can be found in Table \ref{tab_DA1}. As might be expected, when all working models are saturated (except the exposure effect model), the estimates are in reasonable agreement. Fitting a linear model adjusted for radiation exposure, city of residence, sex and an interaction between the covariates yielded an exposure effext estimate of -0.32 (95\% CI: -0.82, 0.21). A two-stage least squares assumption under the assumption that city was a valid instrument yielded an estimate of 3.48 (95\% CI: 2.00, 4.96). Hence the estimates from our analysis tended to be arguably more plausible than from the alternative analyses, albeit with much wider confidence intervals. As one varies the model specifications, one can see the multiply robust estimators produce stable estimates; in contrast,  $\hat{\psi}_{g-Z}$, $\hat{\psi}_{g-S}$, $\hat{\psi}_{IPW}$ all appear to be highly sensitive to departures from fitting saturated models.

\section{Discussion}

In this article we have proposed strategies for identifying causal effects in studies prone to unmeasured confounding by leveraging both invalid instrumental variables and a reference population. Our notion of a reference population includes negative control populations (for whom the treatment has a null effect) as a special case. In this identification strategy, one must restrict either the conditional average treatment effect on the treated or the selection bias not to depend on $\mathbf{Z}$. Which restriction one chooses will depend to some extent on \textit{a priori} information on the choice of $\mathbf{Z}$, but will also be a matter of taste. Restrictions on the treatment effect are perhaps less appealing than in the standard instrumental variable set-up, where the instrument-outcome relationship is already constrained via the exclusion restriction. On the other hand, under NEM, the semiparametric model $\mathcal{M}_{NEM}$ is at least guaranteed to be correctly specified under the null hypothesis. A limitation of the semiparametric theory presented here is that it requires a correctly specified model for the conditional average treatment effect on the treated.  We have adopted this framework as it gives us greater flexibility in allowing for arbitrary $\mathbf{A}$ and $\mathbf{Z}$.

There are also interesting connections with the \textit{proximal inference} framework \citep{miao2018identifying,tchetgen2024introduction}. Here, a negative control outcome and exposure can together be leveraged to obtain nonparametric identification without a parallel trends assumption. In our case $\mathbf{Z}$ is a \textit{potentially invalid} negative control exposure. Moreover, when the reference population is a negative control population, the negative control outcome is simply given by $(1-S)Y$, but because it is not measured in the population of interest, proximal inference cannot be directly applied. Hence our proposal can be viewed as an alternative to proximal inference where the negative control outcome is missing in the population of interest and the negative control exposure is invalid.

Finally, one may wish to supplement our proposals with a sensitivity analysis. Attention may be given in particular to the crucial assumption of partial population exchangeability; one could then proceed by parametrising deviations from this condition. With multiple invalid instrumental variables, the average treatment effect on the treated may become over identified, and specification tests (like the Sargen-Hansen test) can then be employed to assess the validity of the identifying assumptions described here. If one has access to multiple reference populations, then one could compare associations between $\mathbf{Z}$ and $Y$ across the different populations. If associations of similar magnitudes are observed, this would provide support for the partial population exchangeability. This same idea motivates pre-trend tests in difference-in-difference designs. How to optimally combine or synthesise evidence from multiple invalid instrumental variables by leveraging negative controls is also an important area for future work.

\section*{Acknowledgement}
The authors would like to acknowledge support from the National Institutes of Health (grant  579679) and the Research Foundation Flanders (1222522N).

\vspace*{-10pt}

\appendix

\section{Additional information}\label{sm:add_inf}

\subsection{Linear structural equation models}\label{sm:lsem}

Consider the following models for the data-generating process:
\begin{align}\label{twfe}
E(Y^0|Z,S=s,\mathbf{C},U)&=\zeta_{0,s}+\zeta_{1,s} Z+\zeta_{2,s} \mathbf{C}+\zeta_{3,s} U \nonumber\\
E(U|Z,S=s,\mathbf{C})&=\upsilon_{0,s}+\upsilon_{1,s}Z+\upsilon_{2,s}\mathbf{C}
\end{align}
for $s=0,1$. For simplicity, we let $U$ be a single random variable. If we first consider a conditional parallel trends or `additive equi-confounding' condition \eqref{condition_pt}, note that \eqref{condition_pt} can be shown to hold if $\zeta_{3,0}=\zeta_{3,1}$ and 
\begin{align}\label{u_res}
&E(U|A=1,Z,S=1,\mathbf{C})-E(U|A=0,Z,S=1,\mathbf{C})\nonumber\\
&=E(U|A=1,Z,S=0,\mathbf{C})-E(U|A=0,Z,S=0,\mathbf{C})
\end{align}
with probability 1. Here, we are also assuming $Y^{0}\indep A |Z,S,\mathbf{C},U$ for $z=0,1$ (latent conditional exchangeability of treatment). Hence \eqref{condition_pt} can be justified under model \eqref{twfe} if the (conditional mean) association between outcome under control and the unmeasured confounder transports between the target and reference populations. The (conditional mean) association between $U$ and $A$ should also be equal across populations. 

Consider now a full population exchangeability condition \citep{dahabreh2019generalizing}:  $\forall z \in \{0,1\}$ and $\forall \mathbf{c}\in \mathcal{C}_0\cap \mathcal{C}_1$,
    \begin{align}\label{full_pop_ex}
E(Y^0|Z=z,S=1,\mathbf{C}=\mathbf{c})=E(Y^0|Z=z,S=0,\mathbf{C}=\mathbf{c}),
    \end{align}
then since 
\begin{align*}
E(Y^0|Z,S=s,\mathbf{C})&=\zeta_{0,s}+\zeta_{3,s}\upsilon_{0,s}+(\zeta_{1,s} +\zeta_{3,s}\upsilon_{1,s})Z+(\zeta_{2,s} +\zeta_{3,s}\upsilon_{2,s})\mathbf{C}
\end{align*}
for $s=0,1$, a sufficient condition for \eqref{full_pop_ex} to hold is that $\zeta_{i,0}=\zeta_{i,1}$ for $i=0,1,2,3$ and $\upsilon_{j,0}=\upsilon_{j,1}$ for $j=0,1,2$. Hence in the presence of an unmeasured variable $U$, we require all associations encoded by the coefficients in \eqref{twfe} to transport across populations. We note that beyond model \eqref{twfe}, Assumption \ref{partial_pe} is weaker than full population exchangeability since it only restricts the additive interaction between $Z$ and $S$ with respect to the conditional mean of $Y^0$. 

Moving now to the partial population exchangeability Assumption \ref{partial_pe} in the main manuscript, note that in contrast to the above, this requires only that $\zeta_{1,0}=\zeta_{1,1}$, $\zeta_{3,0}=\zeta_{3,1}$ and $\upsilon_{2,0}=\upsilon_{2,1}$. Some remarks below:

\begin{itemize}
\item Suppose that $Y^{0,z}\indep Z |S,\mathbf{C},U$ for $z=0,1$ (latent conditional exchangeability of the instrument). Then in \eqref{twfe}, $\zeta_{1,0}$ and $\zeta_{1,1}$ represent the direct effects of $Z$ on $Y^0$ in each population.
    \item Suppose that the previous condition is strengthened to $Y^{0,z}\indep Z |S,\mathbf{C}$ for $z=0,1$ with probability one (conditional exchangeability of the instrument). Then Assumption \ref{partial_pe} requires only that $\zeta_{1,0}=\zeta_{1,1}$. Hence in this case, the direct causal effect of $Z$ should transport, and no restrictions are placed on the unmeasured confounder. 
    \item Suppose that $\forall z$, $Y^{0,z}=Y^{0}$ (exclusion restriction). Then $\zeta_{1,0}=\zeta_{1,1}=0$ under the exclusion restriction. Hence Assumption \ref{partial_pe} restricts the (conditional mean) association between $Y^0$ and $U$, as was done to justify \eqref{condition_pt}, as well as the (conditional mean) association between $Z$ and $U$. 
    \item In this context, both additive equi-confounding \eqref{condition_pt} and  Assumption \ref{partial_pe} require that $\zeta_{3,0}=\zeta_{3,1}$. However, \eqref{condition_pt} is justified under \eqref{u_res} whereas Assumption \ref{partial_pe} requires that $\upsilon_{1,0}=\upsilon_{1,1}$ as well as $\zeta_{1,0}=\zeta_{1,1}$.
\end{itemize}

\subsection{Results for hypothesis testing}\label{sm:result_ht}

\begin{theorem}\label{theorem:test}
Under Assumptions \ref{ref}-\ref{partial_pe}, then  if 
\begin{align}\label{null_hyo}
E(Y^1-Y^0|A=1,Z=z,S=1,\mathbf{C}=\mathbf{c})=0
\end{align}
$\forall{z} \in \{0,1\}, \mathbf{c}\in \mathcal{C}_0\cap \mathcal{C}_1$, then it follows that 
\begin{align}
E\{Y-t(Z,\mathbf{c})|Z=1,S=1,\mathbf{C}=\mathbf{c}\}=E\{Y-t(Z,\mathbf{c})|Z=0,S=1,\mathbf{C}=\mathbf{c}\}.
\end{align}
\end{theorem}

A proof is given in section \ref{proof_test}.

\subsection{Negative control outcomes - identification}\label{sm:nco_id}

If we assume 
\begin{align}\label{psb_difference-in-differences_alt}
E(Y^0|A=1,Z,\mathbf{C})-E(Y^0|A=0, Z,\mathbf{C})=\gamma(\mathbf{C})
\end{align}
then we have
\begin{align*}
&E(Y^1-Y^0|A=1,Z=z,\mathbf{C}=\mathbf{c})\\
&=E(Y|A=1,z,\mathbf{c})-E(Y|A=0,z,\mathbf{c})\\
&\quad +\left\{\frac{E(Y|A=0,Z=1,\mathbf{c})-E(Y|A=0,Z=0,\mathbf{c})}{E(A|Z=1,\mathbf{c})-E(A|Z=0,\mathbf{c})}\right\}-\left\{\frac{E(W|Z=1,\mathbf{c})-E(W|Z=0,\mathbf{c})}
{E(A|Z=1,\mathbf{c})-E(A|Z=0,\mathbf{c})}\right\}.
\end{align*}

\subsection{Assumptions for a general instrumental variable model}\label{sm:add_assum}

\begin{assumption}{Reference population:}\label{ref_gen}
\[E(Y|\mathbf{Z},S=0,\mathbf{C})=E(Y^\mathbf{0}|\mathbf{Z},S=0,\mathbf{C})\]
with probability 1.
\end{assumption}

\begin{assumption}{Consistency:}\label{consist_gen}
$Y=Y^{\mathbf{A}}$ with probability 1 in individuals with $S=1$. 
\end{assumption}

\begin{assumption}{Relevance:}\label{relevance_gen}
$\mathbf{A} \nindep \mathbf{Z} | \mathbf{C},S=1$ with probability 1.
\end{assumption}

\begin{assumption}{Partial population exchangeability:}\label{partial_pe_gen}
\begin{align*}
&E(Y^\mathbf{0}|\mathbf{Z=z},S=1,\mathbf{C})-E(Y^\mathbf{0}|\mathbf{Z=0},S=1,\mathbf{C})\\
&=E(Y^\mathbf{0}|\mathbf{Z=z},S=0,\mathbf{C})-E(Y^\mathbf{0}|\mathbf{Z=0},S=0,\mathbf{C})
\end{align*}
with probability 1, for all $\mathbf{z}$ in the relevant support of $\mathbf{Z}$.
\end{assumption}

\subsection{Estimation under NSM}\label{sm:nsm}

We will proceed under a generalised version of the NSM assumption:
\begin{assumption}{No selection modification (NSM):}\label{psb_gen}\\
The generalised selection bias function
\[q(\mathbf{a},\mathbf{Z,C})\equiv E(Y^{\mathbf{0}}|\mathbf{A=a,Z},S=1,\mathbf{C})-E(Y^{\mathbf{0}}|\mathbf{A=0,Z},S=1,\mathbf{C})\]
satisfies the restriction $q(\mathbf{a,Z,C})=q(\mathbf{a,C})$ with probability 1.
\end{assumption}

Let us define
\begin{align*}
\epsilon&\equiv Y-\beta(\mathbf{A,Z,C})S-[q(\mathbf{A,Z,C})-E\{q(\mathbf{A,Z,C})|\mathbf{Z},S=1,\mathbf{C}\}]S\\
&\quad -t(\mathbf{Z,C})-b_{1}(\mathbf{C})S-b_{0}(\mathbf{C})
\end{align*}
where we redefine
\begin{align*}
b_1(\mathbf{C})\equiv& E\left(Y-\beta(\mathbf{A,Z,C})S-[q(\mathbf{A,Z,C})-E\{q(\mathbf{A,Z,C})|\mathbf{Z},S=1,\mathbf{C}\}]S|\mathbf{Z=0},S=1,\mathbf{C}\right)\\&-E(Y|\mathbf{Z=0},S=0,\mathbf{C})\\
b_0(\mathbf{C}) \equiv& E(Y|\mathbf{Z=0},S=0,\mathbf{C}).
\end{align*}

Then the orthocomplement to the nuisance tangent space under NSM can be obtained via the following extension of Theorem \ref{if_space}.
\begin{theorem}\label{if_space_b}
The orthocomplement to the nuisance tangent space under model $\mathcal{M}_{NSM}$ is given by
\begin{align*}
\bigg\{&\left\{S\kappa(\mathbf{A,Z,C})+\phi(\mathbf{Z},S,\mathbf{C}) \right\}\epsilon(\psi^\dagger)+S\phi(\mathbf{Z},S,\mathbf{C})[q(\mathbf{A,C})-E\{q(\mathbf{A,C})|\mathbf{Z},S=1,\mathbf{C}\}]\\&:\kappa(\mathbf{A,Z,C})\in \Gamma,\phi(\mathbf{Z},S,\mathbf{C})\in \Omega\bigg\}\cap L^0_2
\end{align*}
where 
\begin{align*}
\Gamma\equiv \left\{\kappa(\mathbf{A,Z,C}):E\{\kappa(\mathbf{A,Z,C})|\mathbf{Z},S=1,\mathbf{C}\}=E\{\kappa(\mathbf{A,Z,C})|\mathbf{A},S=1,\mathbf{C}\}=0\right\}.
\end{align*}
\end{theorem}
A proof is given in section \ref{sm:proof_if_nsm}.

Under the NSM assumption, we will utilise an additional parametric model $q(\mathbf{A,C};\omega^\dagger)$ for the selection bias function $q(\mathbf{A,C})$, where $q(\mathbf{A,C};\omega^\dagger)$ is a known function smooth in a finite dimensional parameter $\omega^\dagger$. A smooth model $f(\mathbf{A}|\mathbf{Z},S=1,\mathbf{C};\pi^\dagger)$ is postulated for $f(\mathbf{A}|\mathbf{Z},S=1,\mathbf{C})$; the event that model $f(\mathbf{A}|\mathbf{Z},S=1,\mathbf{C};\pi^\dagger)$ is correctly specified is denoted by $\mathcal{M}_a$. We will also consider alternative versions of the previous restrictions $\mathcal{M}_{1q}$ and $\mathcal{M}_{3q}$ on the data:
\begin{itemize}
\item $\mathcal{M}_{1q}$: $t(\mathbf{Z,C})=t(\mathbf{Z,C};\nu^{\dagger})$, $b_0(\mathbf{C})=b_0(\mathbf{C};\theta^{\dagger}_0)$,  $b_1(\mathbf{C})=b_1(\mathbf{C};\theta^{\dagger}_1)$ and $q(\mathbf{A,C})=q(\mathbf{A,C};\omega^\dagger)$.
\item $\mathcal{M}_{3q}$: $f(S|\mathbf{Z}=\mathbf{0},\mathbf{C})=f(S|\mathbf{Z}=\mathbf{0},\mathbf{C};\alpha^\dagger)$, $OR(\mathbf{Z},S,\mathbf{C})=OR(\mathbf{Z},S,\mathbf{C};\rho^\dagger)$, $b_1(\mathbf{C})=b_1(\mathbf{C};\theta^{\dagger}_1)$, and $q(\mathbf{A,C})=q(\mathbf{A,C};\omega^\dagger)$.
\end{itemize}
In order to estimate nuisance parameters and then obtain $\hat{\psi}_{MR-NSM}$ under NSM, one can use the strategy described in Section \ref{sm:est_strat}. 

\begin{theorem}\label{triply2}
Under the union $\mathcal{M}_{NSM}\cap \mathcal{M}_{a} \cap(\mathcal{M}_{1q}\cup \mathcal{M}_2 \cup\mathcal{M}_{3q}\cup\mathcal{M}_4)$ and standard regularity conditions, $\hat{\psi}_{MR-NSM}$ is a CAN estimator of $\psi^\dagger$.
\end{theorem}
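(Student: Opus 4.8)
The plan is to recognise $\hat{\psi}_{MR-NSM}$ as the $\psi^\dagger$-coordinate of a single $Z$-estimator whose estimating function stacks, in the order they are solved, the equations for $(\hat{\tau},\hat{\alpha},\hat{\rho})$, then $(\hat{\nu},\hat{\theta}_0)$ and the re-estimated $\hat{\nu}$, then $(\hat{\theta}_1,\hat{\pi},\hat{\omega})$, and finally the two rows defining $\tilde{\psi}$ and $\hat{\psi}_{MR-NSM}$. Under the stated regularity conditions, consistency and asymptotic normality follow from standard M-estimation theory once the full stacked estimating function is shown to have mean zero when every nuisance is evaluated at its probability limit, under each member of the union $\mathcal{M}_{NSM}\cap\mathcal{M}_a\cap(\mathcal{M}_{1q}\cup\mathcal{M}_2\cup\mathcal{M}_{3q}\cup\mathcal{M}_4)$; the Taylor expansion, non-singularity of the Jacobian and moment bounds I would absorb into those regularity conditions. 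The proof therefore reduces to (i) pinning down the probability limits of the nuisance estimators in each submodel and (ii) verifying the mean-zero property of the two final rows at those limits.

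For step (i), I would import the double-robustness arguments from the proof of Theorem \ref{triply} essentially unchanged: the odds-ratio equation makes $\hat{\rho}$ consistent whenever at least one of $f(\mathbf{Z}|S=0,\mathbf{C};\tau^\dagger)$ and $f(S|\mathbf{Z}=0,\mathbf{C};\alpha^\dagger)$ is correct, the Robins-type equation renders $\hat{\nu}$ doubly robust against misspecification of $b_0(\mathbf{C};\theta^\dagger_0)$, and $\hat{\theta}_1$ is doubly robust against the density factor, so that each holds in every branch of the union. The genuinely new ingredient is $\hat{\omega}$: since $\epsilon$ contains the centred selection bias $q(\mathbf{A,C};\omega^\dagger)-E\{q(\mathbf{A,C};\omega^\dagger)\mid\mathbf{Z},S=1,\mathbf{C}\}$ and the inner expectation is taken under $f(\mathbf{A}\mid\mathbf{Z},S=1,\mathbf{C};\hat{\pi})$, the $e_6$-equation identifies $\omega^\dagger$ only when this propensity is correct. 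This is precisely why $\mathcal{M}_a$ sits in the intersection rather than the union, and I would make the dependence explicit by showing that, given $\mathcal{M}_a$, the limit of $\hat{\omega}$ solves the population analogue of its equation under every submodel.

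For step (ii), the first row $\phi(\mathbf{Z},S,\mathbf{C};\hat{\tau},\hat{\alpha},\hat{\rho})\,\epsilon^*(\psi^\dagger,\hat{\nu},\hat{\theta})$ is structurally identical to the NEM estimating function, because $\epsilon^*=Y-\beta(\mathbf{A,Z,C})S-t(\mathbf{Z,C})-b_1(\mathbf{C})S-b_0(\mathbf{C})$ carries no $q$; its unbiasedness under each of $\mathcal{M}_{1q},\mathcal{M}_2,\mathcal{M}_{3q},\mathcal{M}_4$ then follows verbatim from the quadruple-robustness argument for $\hat{\psi}_{MR-NEM}$, using that any $\phi\in\Omega$ annihilates both the $\mathbf{Z}$- and $S$-marginals. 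The crux is the second row $S\kappa(\mathbf{A,Z,C};\hat{\pi},\hat{\tau})\,\epsilon(\psi^\dagger,\hat{\nu},\hat{\theta},\hat{\omega},\hat{\pi})$, which I would handle through the defining constraints $E\{\kappa\mid\mathbf{Z},S=1,\mathbf{C}\}=E\{\kappa\mid\mathbf{A},S=1,\mathbf{C}\}=0$ of $\Gamma$. Decomposing the possibly-biased $\epsilon$ at the nuisance limits, every term that is a function of $(\mathbf{Z},\mathbf{C})$ only, namely $t$, $b_0$, $b_1$ and $E\{q\mid\mathbf{Z},S=1,\mathbf{C}\}$, is killed by the first constraint, while the $A$-dependent selection-bias piece $q(\mathbf{A,C};\omega)$ is killed by the second; the remaining $Y-\beta$ contribution integrates to zero by the identification in Theorem \ref{if_space}(b) together with $\mathcal{M}_a$, which guarantees that the centring of $q$ is computed under the correct propensity.

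The main obstacle is establishing that the plug-in $\kappa(\mathbf{A,Z,C};\hat{\pi},\hat{\tau})$ actually lands in $\Gamma$, i.e. satisfies both conditional-mean-zero constraints under the \emph{true} law, in each branch of the union rather than only at the joint truth. This is the step where the admissible-independence-density representation (\ref{aid_rep_az}) must be invoked carefully: the ratio $f^\ddagger(\mathbf{A,Z}|S=1,\mathbf{C})/f(\mathbf{A,Z}|S=1,\mathbf{C})$ is what makes the two projections vanish, and checking that correctness of $f(\mathbf{A}\mid\mathbf{Z},S=1,\mathbf{C};\hat{\pi})$ under $\mathcal{M}_a$, combined with whichever density model is correct in the given submodel, suffices for $\Gamma$-membership is the delicate part, and is exactly where the necessity of placing $\mathcal{M}_a$ in every branch becomes transparent.
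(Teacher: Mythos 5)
Your proposal is correct and follows essentially the same route as the paper's own proof: the $\phi\,\epsilon^*$ row is disposed of verbatim by the Theorem \ref{triply} argument, and the $S\kappa\,\epsilon$ row is handled by taking iterated expectations, decomposing the bias of $\epsilon$ at the nuisance limits, and annihilating it branch-by-branch via the two conditional-mean-zero properties of $\kappa$, with $\mathcal{M}_a$ placed in the intersection for exactly the reasons you give. The ``delicate'' $\Gamma$-membership step you flag in fact resolves itself along the lines of your step (i): under $\mathcal{M}_2$ and $\mathcal{M}_4$ both constraints hold at the true law because $f(\mathbf{A,Z}|S=1,\mathbf{C};\pi^\dagger,\tau^\dagger,\rho^\dagger)$ is then fully correct, while under $\mathcal{M}_{1q}$ and $\mathcal{M}_{3q}$ full $\Gamma$-membership is neither guaranteed (the $\mathbf{Z}$-density may be misspecified) nor needed, since the $q$- and $b_1$-terms vanish there by model correctness and only $E\{\kappa|\mathbf{Z},S=1,\mathbf{C}\}=0$ is invoked, which holds under $\mathcal{M}_a$ alone.
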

A proof is given in section \ref{sm:proof_triply_nsm}. Hence, we can obtain a similar quadruple robustness result under the NSM assumption, although we now require a correctly specified propensity score model. Note that  a similar parametrisation of the joint density $f(\mathbf{A,Z}|S=1,\mathbf{C})$ based on the generalised odds ratio function could be used here. Nevertheless, since one needs to model $f(\mathbf{A}|\mathbf{Z},S=1,\mathbf{C})$ correctly in order to obtain an unbiased estimator, adopting such a parametrisation will not generally lead to greater robustness. Finally, since estimation under NSM requires consistent estimation of the propensity score, one approach would be to use flexible nonparametric/machine learning methods of $f(\mathbf{A}|\mathbf{Z},S=1,\mathbf{C})$ combined with cross-fitting, even if one were to then use parametric models for the other nuisance parameters. Under NSM, there appears to exist no closed-form expression for the efficient score for arbitrary $\mathbf{A}$ and $\mathbf{Z}$. 

\subsection{Algorithm for constructing cross-fit de-biased machine learning estimators}\label{sm:cross_fit}

If all nuisance parameter estimators converge at a rate faster than $n^{-1/4}$ and sample-splitting/cross-fitting is used, then the resulting estimators of $\psi^\dagger$ are RAL with variance equal to the variance of its influence function; see e.g. 
\citet{chernozhukov2018double} for a review.

In what follows, let $\mathbf{O}=(Y,\mathbf{A},\mathbf{Z},S,\mathbf{C})$ and  $\eta(\mathbf{O})$ refer to the nuisance parameters (at the truth). 
\begin{enumerate}
\item Split the sample into parts $I_k$ (that are each are of size $n_k=n/K$). Here, $K$ is an integer (we shall assume $n$ is a multiple of $K$). For each $I_k$, $I^c_{k}$ denotes the  indices that are not in $I_k$.
\item For each $k=1,...,k$, using $I^c_k$ only, estimate $\eta$ as $\hat{\eta}^c_k=\hat{\eta}((\mathbf{O}_i)_{i\in I^c_k})$. 
\item Construct $K$ estimators $\hat{\psi}_k, k=1,...,k$ of $\psi^\dagger$: under NEM, solve the equations 
\begin{align*}
0=\sum_{i\in I_k}\phi(\mathbf{Z}_i,S_i,\mathbf{C}_i;\hat{\eta}^c_{k})\epsilon_i^*(\psi^\dagger,\hat{\eta}^c_{k})
\end{align*}
for $\psi^\dagger$, for each $k=1,...,k$. Under NSM, solve the equations
\begin{align*}
&0=\begin{pmatrix}
&\sum_{i\in I_k}\phi(\mathbf{Z}_i,S_i,\mathbf{C}_i;\hat{\eta}^c_k)\epsilon^*_{i}(\psi^\dagger,\hat{\eta}^c_k)\\
&\sum_{i\in I_k}S_i\kappa(\mathbf{A}_i,\mathbf{Z}_i,\mathbf{C}_i;\hat{\eta}^c_k)\epsilon_{i}(\psi^\dagger,\hat{\eta}^c_k)
\end{pmatrix}
\end{align*}
for $\psi^\dagger$, for each $k=1,...,K$.
\item Take the average of the $K$ estimators of $\psi^\dagger$ to obtain $\hat{\psi}_{CF}$.
\item
Estimate the standard error for the cross-fit estimator of $\hat{\psi}_{CF}$ using a sandwich estimator, as described in \citet{chernozhukov2018double}. For example, under NEM, we estimate the asymptotic variance as $\hat{B}^{-1}\hat{V}(\hat{B}^{-1})^T$, where
\begin{align*}
\hat{V}&=\frac{1}{K}\sum^K_{k=1}\left[\frac{1}{n_k}\sum_{i\in I_k}\left\{\phi(\mathbf{Z}_i,S_i,\mathbf{C}_i;\hat{\eta}^c_{k})\epsilon^*_{i}(\hat{\psi}_{CF},\hat{\eta}^c_{k})\right\}\left\{\phi(\mathbf{Z}_i,S_i,\mathbf{C}_i;\hat{\eta}^c_{k})\epsilon^*_{i}(\hat{\psi}_{CF},\hat{\eta}^c_{k})\right\}^T\right]\\
\hat{B}&=-\frac{1}{K}\sum^K_{k=1}\left[\frac{1}{n_k}\sum_{i\in I_k} \phi(\mathbf{Z}_i,S_i,\mathbf{C}_i;\hat{\eta}^c_{k})\epsilon^{*'}_{i}(\hat{\psi}_{CF},\hat{\eta}^c_{k})\right]
\end{align*}
\end{enumerate}
and $\epsilon^{*'}(\psi,\eta)=\partial \epsilon^*(\psi,\eta)/\partial \psi$ is a row vector that is the length of the dimension of $\psi$.

As in other semiparametric regression problems, a complication is how to perform machine learning-based estimation of the nuisance parameters, given that they are not all conditional mean functions that can be estimated directly; see Section 4.2 of \citet{chernozhukov2018double} for further discussion. One option is to restrict to estimation methods that can more easily respect the structure of the semiparametric model  (generalised additive models, Lasso, deep neural networks etc); the other is to adopt the `localised de-biased machine learning' procedure described in \citet{kallus2019localized}, where e.g. an initial (biased) estimator of $\psi^\dagger$ could be used to construct estimators of other nuisance parameters. 

\subsection{Estimation strategy for $\psi^\dagger$ based on parametric working models}\label{sm:est_strat}

\begin{enumerate}
    \item Estimate $\tau^\dagger$ and $\alpha^\dagger$ as $\hat{\tau}$ and $\hat{\alpha}$ using maximum likelihood or M-estimation.
    \item Estimate $\rho^\dagger$ as $\hat{\rho}$ by solving the equations
\begin{align*}
 0=\sum^n_{i=1}&\left[e_1(\mathbf{Z}_i,\mathbf{C}_i)-\frac{E\{ e_1(\mathbf{Z}_i,\mathbf{C}_i)f(S_i=1|\mathbf{Z}_i,\mathbf{C}_i;\hat{\alpha},\rho^\dagger)f(S_i=0|\mathbf{Z}_i,\mathbf{C}_i;\hat{\alpha},\rho^\dagger)|\mathbf{C}_i;\hat{\tau},\hat{\alpha},\rho^\dagger\}}{E\{f(S_i=1|\mathbf{Z}_i,\mathbf{C}_i;\hat{\alpha},\rho^\dagger)f(S_i=0|\mathbf{Z}_i,\mathbf{C}_i;\hat{\alpha},\rho^\dagger)|\mathbf{C}_i;\hat{\tau},\hat{\alpha},\rho^\dagger\}}\right]\\
 & \times \{S_i-f(S_i=1|\mathbf{Z}_i,\mathbf{C}_i;\hat{\alpha},\rho^\dagger)\}
  \end{align*}
for $\rho^\dagger$ \citep{tchetgen2010doubly}, where $e_1(\mathbf{Z},\mathbf{C})$ is an arbitrary function of the same dimension of $\rho^\dagger$. 
    \item Estimate $\nu^\dagger$ and $\theta_0^\dagger$ as $\tilde{\nu}$ and $\hat{\theta}_0$ by solving the equations 
    \begin{align*}
 0=\sum^n_{i=1}& (1-S_i)
 \begin{pmatrix}
e_2(\mathbf{Z}_i,\mathbf{C}_i)\\
   e_3(\mathbf{C}_i)
 \end{pmatrix}\epsilon^*_i(\psi^\dagger,\nu^\dagger,\theta^\dagger_0)
  \end{align*}%
for $\nu^\dagger$ and $\theta_0^\dagger$, where $e_2(\mathbf{Z},\mathbf{C})$ and $e_3(\mathbf{C})$ are of the same dimension as $\nu^\dagger$ and $\theta^\dagger_0$ respectively. Re-estimate $\nu^\dagger$ as $\hat{\nu}$ by solving the equations
\begin{align*}
 0=\sum^n_{i=1}(1-S_i)[e_2(\mathbf{Z}_i,\mathbf{C}_i)-E\{e_2(\mathbf{Z}_i,\mathbf{C}_i)|S_i=0,\mathbf{C}_i;\hat{\tau}\}]\epsilon^*_i(\psi^\dagger,\nu^\dagger,\hat{\theta}_0,\theta_1^\dagger)
\end{align*}
for $\nu^\dagger$ \citep{robins1994correcting}.
\item Estimate $\theta_1^\dagger$ and $\psi^\dagger$ as $\hat{\theta}_1$ and $\tilde{\psi}$ respectively, by solving the equations
    \begin{align*}
 0=\sum^n_{i=1}& 
 \begin{pmatrix}
e_4(\mathbf{C}_i)\{S_i-E\{S_i|\mathbf{Z}_i,\mathbf{C}_i;\hat{\alpha},\hat{\rho}\}\\
   e_5(\mathbf{Z}_i,\mathbf{C}_i)\{S_i-E\{S_i|\mathbf{Z}_i,\mathbf{C}_i;\hat{\alpha},\hat{\rho}\}
 \end{pmatrix}\epsilon^*_i(\psi^\dagger,\hat{\nu},\hat{\theta}_0,\theta_1^\dagger)
  \end{align*}%
for $\theta_1^\dagger$ and $\psi^\dagger$, where $e_4(\mathbf{C})$ and $ e_5(\mathbf{Z},\mathbf{C})$ are of the same dimension as $\theta_1^\dagger$ and $\psi^\dagger$ respectively.
 \item Re-estimate $\psi^\dagger$ as $\hat{\psi}_{MR-NEM}$ by solving the equations
 \[0=\sum^n_{i=1}\phi(\mathbf{Z}_i,S_i,\mathbf{C}_i;\hat{\tau},\hat{\alpha},\hat{\rho})\epsilon_i^*(\psi ^\dagger,\hat{\nu},\hat{\theta})\]
 for $\psi^\dagger$.
\end{enumerate}

In the case that one is interested in inference under the NSM condition, there are additional nuisance parameters to consider. For example, $\pi^\dagger$ can be estimated as $\hat{\pi}$ via maximum likelihood; one can also obtain an estimator $\hat{\omega}$ by solving the equations 
    \begin{align*}
 0=\sum^n_{i=1}&
   e_6(A_i,\mathbf{C}_i)
  \epsilon_i(\psi^\dagger,\nu^\dagger,\theta^\dagger,\omega^\dagger,\hat{\pi})
  \end{align*}
for $\omega^\dagger$,  where $e_6(A,\mathbf{C})$ is of the same dimension as $\omega^\dagger$.  Then $\hat{\psi}_{MR-NSM}$ is obtained by solving the equations
\begin{align*}
 0=\sum^n_{i=1}&
 \begin{pmatrix}
\phi(\mathbf{Z}_i,S_i,\mathbf{C}_i;\hat{\tau},\hat{\alpha},\hat{\rho})\epsilon_i^*(\psi^\dagger,\hat{\nu},\hat{\theta})\\ 
S\kappa(\mathbf{A}_i,\mathbf{Z}_i,\mathbf{C}_i;\hat{\pi},\hat{\tau})\epsilon_{i}(\psi^\dagger,\hat{\nu},\hat{\theta},\hat{\omega},\hat{\pi})
 \end{pmatrix}
 \end{align*}
for $\psi^\dagger$.

\subsection{Local semiparametric efficiency}\label{sec_lse}

Once we have obtained the class of regular and asymptotically linear estimators of $\psi^\dagger$ under a semiparametric model, it remains to identify the optimal estimator within a given class. Under NEM, we are able to compute a closed form representation of the efficient score. Before doing this, we introduce some additional notation. 
Let us define \[\mu(\mathbf{Z},S,\mathbf{C})\equiv E\left\{\frac{\partial \beta(\mathbf{A,Z,C};\psi^\dagger)S}{\partial \psi}\bigg|_{\psi=\psi^\dagger}\bigg|\mathbf{Z},S,\mathbf{C}\right\},\]
\[P_{\sigma}(\mathbf{Z,C})\equiv \frac{f(S=1|\mathbf{Z,C})\sigma^{-2}(\mathbf{Z},1,\mathbf{C})f(S=0|\mathbf{Z,C})\sigma^{-2}(\mathbf{Z},0,\mathbf{C})}{f(S=1|\mathbf{Z,C})\sigma^{-2}(\mathbf{Z},1,\mathbf{C})+f(S=0|\mathbf{Z,C})\sigma^{-2}(\mathbf{Z},0,\mathbf{C})},\]
where $\sigma^{2}(\mathbf{Z},S,\mathbf{C})\equiv E\{\epsilon^{*2}(\psi^\dagger)|\mathbf{Z},S,\mathbf{C}\}$.
\begin{theorem}\label{pem_speff}
Under the semiparametric model $\mathcal{M}_{NEM}$ and with admissible independence density
\[
f^{\ddag }\left(\mathbf{Z},S|\mathbf{C}\right) =\left( \frac{1}{2}\right) ^{S}\left( \frac{1}{2}%
\right) ^{1-S}f(\mathbf{Z}|\mathbf{C})
\]
the optimal index function $r^{opt}_{0}(\mathbf{Z},S,\mathbf{C})$ in \begin{align}
\Omega=&\bigg\{\frac{f^\ddagger(\mathbf{Z},S|\mathbf{C})}{f(\mathbf{Z},S|\mathbf{C})}[r_0(\mathbf{Z},S,\mathbf{C})-E^\ddagger\{r_0(\mathbf{Z},S,\mathbf{C})|\mathbf{Z,C}\}\nonumber\\&-E^\ddagger\{r_0(\mathbf{Z},S,\mathbf{C})|S,\mathbf{C}\}+E^\ddagger\{r_0(\mathbf{Z},S,\mathbf{C})|\mathbf{C}\}]:\textrm{$r_0(\mathbf{Z},S,\mathbf{C})$ unrestricted}\bigg\}\label{aid_rep_zs}
\end{align}
is 
\[
r_{0}^{opt}(\mathbf{Z},S,\mathbf{C}) =\left\{\mu(\mathbf{Z},1,\mathbf{C})-\frac{E\left\{\mu(\mathbf{Z},1,\mathbf{C})P_{\sigma}(\mathbf{Z,C})|\mathbf{C}\right\}}{E\left\{P_{\sigma}(\mathbf{Z,C})|\mathbf{C}\right\}}\right\}2(-1)^{1-S}P_{\sigma }(\mathbf{Z,C}).
\]%
Furthermore, the efficient score is an unbiased estimating function in the union model $\mathcal{M}_{NEM}\cap(\mathcal{M}_1\cup \mathcal{M}_2 \cup\mathcal{M}_3 \cup \mathcal{M}_4)$ that is locally efficient at the intersection submodel $\mathcal{M}_{NEM}\cap \mathcal{M}_1\cap \mathcal{M}_2 \cap\mathcal{M}_3 \cap \mathcal{M}_4$.
\end{theorem}
A proof is given in \ref{sm:proof_speff}. We emphasise that the choice of admissible independence densities is not an assumption on the data generating mechanism. Rather, it allows us to simplify the complex general expression for the efficient score into an equation of the form \eqref{aid_rep_zs}. Similar to the standard instrumental variable set-up, efficient estimation generally requires modelling the conditional mean of $\mathbf{A}$ given $\mathbf{Z}$ and $\mathbf{C}$ in those with $S=1$. However, even if this model is misspecified, the resulting estimating of $\psi^\dagger$ will remain consistent (although it will no longer generally be efficient). In order to further simplify the above result, we revisit Example \ref{bin_bin}.

\begin{corollary}\label{pem_speff_bin}
If $\mathbf{Z}$ is binary, then the optimal choice $m_{opt}(\mathbf{C})$ of $m(\mathbf{C})$ in the estimating function (\ref{binary_ee}) is 
\begin{align*}
&m_{{opt}}(\mathbf{C})=E\left[
\left\{\frac{(-1)^{Z+S}}{f(Z,S|\mathbf{C})}\right\}^2\sigma^{2}(Z,S,\mathbf{C})
\bigg| \mathbf{C}\right]^{-1}\{\mu(Z=1,S=1,\mathbf{C})-\mu(Z=0,S=1,\mathbf{C})\}.
\end{align*}
If $\sigma^{2}(Z,S,\mathbf{C})=\sigma^{2}$, then $m_{{opt}}(\mathbf{C})$ reduces to $w_0(\mathbf{C})^{-1}\sigma^{-2}\{\mu(Z=1,S=1,\mathbf{C})-\mu(Z=0,S=1,\mathbf{C})\}$, where 
\[w_0(\mathbf{C})=\frac{1}{f(Z=1,S=1|\mathbf{C})}+\frac{1}{f(Z=1,S=0|\mathbf{C})}+\frac{1}{f(Z=0,S=1|\mathbf{C})}+\frac{1}{f(Z=0,S=0|\mathbf{C})}.\]
\end{corollary}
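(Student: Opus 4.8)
The plan is to exploit the fact that, for binary $\mathbf{Z}$ under the randomisation choice of Example \ref{bin_bin}, the space $\Omega$ is one–dimensional for each fixed $\mathbf{C}$: every element is $\phi=m(\mathbf{C})g$ with $g=g(Z,S,\mathbf{C})=(-1)^{Z+S}/f(Z,S|\mathbf{C})$, so searching for the efficient estimating function over $\Omega$ reduces to optimising over the scalar function $m(\mathbf{C})$. First I would record that, since the $q$- and $t,b$-terms in $\epsilon^*$ do not depend on $\psi$, we have $\partial\epsilon^*/\partial\psi=-S\,\partial\beta(\mathbf{A,Z,C};\psi)/\partial\psi$, and because $\phi$ is measurable with respect to $(\mathbf{Z},S,\mathbf{C})$, iterated expectation gives $E\{\phi\,\partial\epsilon^*/\partial\psi\}=-E\{\phi\mu\}$ by the definition of $\mu$. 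The RAL estimator solving $\sum_i\phi_i\epsilon^*_i(\psi^\dagger)=0$ then has influence function proportional to $\phi\epsilon^*$ and asymptotic variance $E\{\phi^2\sigma^2\}/(E\{\phi\mu\})^2$ (with $\sigma^2=\sigma^2(\mathbf{Z},S,\mathbf{C})$), which is what I would minimise over $\phi\in\Omega$.

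Conditioning on $\mathbf{C}$, the objective becomes $E\{m(\mathbf{C})^2A(\mathbf{C})\}/(E\{m(\mathbf{C})B(\mathbf{C})\})^2$ with $A(\mathbf{C})=E\{g^2\sigma^2\mid\mathbf{C}\}$ and $B(\mathbf{C})=E\{g\mu\mid\mathbf{C}\}$. The one computation that matters is $B(\mathbf{C})$: summing over the $2\times2$ table of $(Z,S)$, the density weights cancel and $E\{g\mu\mid\mathbf{C}\}=\sum_{z,s}(-1)^{z+s}\mu(z,s,\mathbf{C})$; since $\mu$ carries the factor $S$ we have $\mu(\mathbf{Z},0,\mathbf{C})=0$, so the two $S=0$ cells vanish and $B(\mathbf{C})=\mu(Z=1,S=1,\mathbf{C})-\mu(Z=0,S=1,\mathbf{C})$. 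This is precisely the contrast appearing in the statement, and it is non-null exactly under bespoke IV relevance.

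Finally, minimising $E\{m^2A\}/(E\{mB\})^2$ over $m$ is the standard optimal-instrument problem; since $A(\mathbf{C})$ is scalar, the minimiser is $m_{opt}(\mathbf{C})\propto A(\mathbf{C})^{-1}B(\mathbf{C})$, the proportionality constant being immaterial as it cancels from the estimating equation (in the scalar-$\psi$ case this is just the equality condition in Cauchy–Schwarz, $m\sqrt{A}\propto B/\sqrt{A}$). Substituting $A(\mathbf{C})=E[\{(-1)^{Z+S}/f(Z,S|\mathbf{C})\}^2\sigma^2(\mathbf{Z},S,\mathbf{C})\mid\mathbf{C}]$ yields the first display. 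For the homoscedastic case I would note $E[\{(-1)^{Z+S}/f(Z,S|\mathbf{C})\}^2\mid\mathbf{C}]=\sum_{z,s}f(z,s|\mathbf{C})^{-1}=w_0(\mathbf{C})$, so $A(\mathbf{C})=\sigma^2w_0(\mathbf{C})$ and $m_{opt}(\mathbf{C})=w_0(\mathbf{C})^{-1}\sigma^{-2}\{\mu(Z=1,S=1,\mathbf{C})-\mu(Z=0,S=1,\mathbf{C})\}$, as claimed; one can check this coincides with the element of $\Omega$ generated by $r_0^{opt}$ in Theorem \ref{pem_speff}. The only real obstacle is bookkeeping: identifying $\partial\epsilon^*/\partial\psi$ and using $\mu(\cdot,0,\cdot)=0$ to collapse the cross-moment to the required $Z$-contrast; once that is in hand the optimisation is routine.
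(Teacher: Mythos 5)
Your proposal is correct, and it reaches the corollary by a route that is dual to, but not identical with, the paper's. The paper obtains $m_{opt}(\mathbf{C})$ as a population least squares projection of the score $U_{\psi^\dagger}$ onto the class $\{m(\mathbf{C})(-1)^{Z+S}f(Z,S|\mathbf{C})^{-1}\epsilon^*(\psi^\dagger)\}$, using the score identity $E(U_{\psi^\dagger}\epsilon^*|Z,S,\mathbf{C})=\mu(\mathbf{Z},S,\mathbf{C})$ and then collapsing the four-cell sum exactly as you do. You never touch the score: you instead compute the M-estimator sandwich variance $E\{\phi^2\sigma^2\}/(E\{\phi\mu\})^2$ directly, using $\partial\epsilon^*/\partial\psi=-S\,\partial\beta/\partial\psi$ (valid since the $q$-terms cancel inside $\epsilon^*$), and minimise over $m(\mathbf{C})$ by Cauchy--Schwarz. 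The two derivations are linked by the generalised information equality $E(U_{\psi^\dagger}\epsilon^*\mid Z,S,\mathbf{C})=-E(\partial\epsilon^*/\partial\psi\mid Z,S,\mathbf{C})=\mu(\mathbf{Z},S,\mathbf{C})$, so the same moments $A(\mathbf{C})=E\{g^2\sigma^2|\mathbf{C}\}$ and $B(\mathbf{C})=E\{g\mu|\mathbf{C}\}$ appear in both, and your collapse of $B(\mathbf{C})$ via $\mu(\mathbf{Z},0,\mathbf{C})=0$ is exactly the paper's final step. Your approach buys two things: it is more elementary (no explicit score calculus, and it makes the one-dimensionality of $\Omega$ per fixed $\mathbf{C}$ the organising observation, which you correctly justified -- four cells minus three independent margin constraints), and it makes transparent \emph{why} the $S=0$ cells vanish, so that your cross-moment lands on the contrast $\mu(Z{=}1,S{=}1,\mathbf{C})-\mu(Z{=}0,S{=}1,\mathbf{C})$ stated in the corollary; the paper's displayed derivation in fact ends with $\mu(Z{=}1,S{=}1,\mathbf{C})-\mu(Z{=}1,S{=}0,\mathbf{C})$, a typo that your bookkeeping corrects. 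What the paper's projection route buys in exchange is that it handles vector-valued $\psi^\dagger$ without modification, whereas your Cauchy--Schwarz equality argument is, as you acknowledge, stated for scalar $\psi$; for the vector case you would need the matrix (positive semidefinite order) optimal-instrument result, though the conclusion $m_{opt}=A^{-1}B$ is unchanged. Your homoscedastic simplification $A(\mathbf{C})=\sigma^2\sum_{z,s}f(z,s|\mathbf{C})^{-1}=\sigma^2 w_0(\mathbf{C})$ matches the paper, and the observation that the proportionality constant is immaterial to the estimating equation is sound.
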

Note that the weight function $w_0(\mathbf{C})^{-1}$ will automatically give high weight given in the estimating equations to strata of the data where there is most overlap in terms of the distribution $f(Z,S|\mathbf{C})$. In studies where the inverse weights $1/f(Z,S|\mathbf{C})$ are extreme, we might expect an efficient estimator to thus perform better than other multiply robust estimators both in terms of precision as well as stability and finite sample bias.

\subsection{Negative control outcomes - estimation}\label{sm:nc_out}

Let us define (or redefine)
\begin{align*}
q(\mathbf{a,Z,C})\equiv&E(Y^\mathbf{0}|\mathbf{A=a},\mathbf{Z},\mathbf{C})-E(Y^\mathbf{0}|\mathbf{A=0},\mathbf{Z},\mathbf{C})\\
t(\mathbf{z,C})\equiv&E(Y^\mathbf{0}|\mathbf{Z=z},\mathbf{C})-E(Y^\mathbf{0}|\mathbf{Z=0},\mathbf{C})\\
b^*_1(\mathbf{C})\equiv&E(Y^\mathbf{0}|\mathbf{Z=0},\mathbf{C})\\
b_0(\mathbf{C})\equiv&E(W^\mathbf{0}|\mathbf{Z=0},\mathbf{C}).
\end{align*}
Under slightly altered versions of Assumptions \ref{ref_gen} and \ref{partial_pe_gen}, $t(\mathbf{z,C})$ is identified as 
\[t(\mathbf{z,C})=E(w|\mathbf{Z=z},\mathbf{C})-E(w|\mathbf{Z=0},\mathbf{C}).\]
The following result is a corollary of Theorem \ref{if_space}.
\begin{corollary}\label{cor_did}
 The (conditional) density of $S$ is degenerate and  $f(\mathbf{Z}|S=1,\mathbf{C})=f(\mathbf{Z}|S=0,\mathbf{C})=f(\mathbf{Z}|\mathbf{C})$, such that the nuisance tangent space is smaller.
Under the semiparametric model implied by restriction (\ref{mod_res}) and the NEM Assumption \ref{pem_gen}, the estimating function for $\psi^\dagger$ implied by Theorem \ref{if_space} simplifies to
\begin{align}\label{pem_did_ef}
[\phi(\mathbf{Z},\mathbf{C})-E\{\phi(\mathbf{Z},\mathbf{C})|\mathbf{C}\}][Y-W-\beta(\mathbf{A,C};\psi^\dagger)-\{b^*_1(\mathbf{C})-b_0(\mathbf{C})\}]
\end{align}
where $\phi(\mathbf{Z,C})$ is arbitrary. Also, if $q(\mathbf{a,z,c})=q(\mathbf{a,c})$, then the estimating functions for $\psi^\dagger$ derived under the NSM Assumption \ref{psb_gen} reduce to 
\begin{align}\label{psb_did_ef}
\begin{pmatrix}
[\phi(\mathbf{Z},\mathbf{C})-E\{\phi(\mathbf{Z},\mathbf{C})|\mathbf{C}\}][Y-W-\beta(\mathbf{A,Z,C};\psi^\dagger)-\{b^*_1(\mathbf{C})-b_0(\mathbf{C})\}]\\
\kappa(\mathbf{A,Z,C})\left(Y-\beta(\mathbf{A,Z,C};\psi^\dagger)-[q(\mathbf{A,C})-E\{q(\mathbf{A,C})|\mathbf{Z,C}\}]-t(\mathbf{Z,C})-b^*_1(\mathbf{C})\right)
\end{pmatrix}
\end{align}
where $\phi(\mathbf{Z,C})$ is arbitrary and $\kappa(\mathbf{A,Z,C})$ satisfies $E\{\kappa(\mathbf{A,Z,C})|\mathbf{A,C}\}=E\{\kappa(\mathbf{A,Z,C})|\mathbf{Z,C}\}=0$.
\end{corollary}

With negative control outcomes, following results in Section \ref{sec_mr} of the main paper, note that estimators of $\psi^\dagger$ based on the estimating functions given in Corollary \ref{cor_did} are doubly robust. Specifically, an estimator based on equation (\ref{pem_did_ef}) is unbiased if either i) $f(\mathbf{Z}|\mathbf{C})$ or ii) $b^*_1(\mathbf{C})$ and $b_0(\mathbf{C})$ (or their difference) are consistently estimated. Similarly, an estimator based on equation (\ref{psb_did_ef}) is unbiased if, in addition to $f(\mathbf{A}|\mathbf{Z,C})$, either i) $f(\mathbf{Z}|\mathbf{C})$ or ii) $q(\mathbf{A,C})$,  $t(\mathbf{Z,C})$ $b^*_1(\mathbf{C})$ and $b_0(\mathbf{C})$ are consistently estimated. Hence, with a valid negative control outcome, our estimators are expected to be additionally robust and efficient relative to estimators with a valid reference population.

\subsection{The average treatment effect in the treated}\label{sm:att}

 It may be that interest is in the marginal rather than the conditional ATT. Suppose that $\mathbf{A}$ and $\mathbf{Z}$ are binary; then in the case that $\beta(Z)=\beta$, previous methods can be used due to the collapsibility of the linear link function,  but one may not wish to invoke this restriction.  It follows from the identification results in Section 2 that under the NEM Assumption \ref{pem},  \[\psi^*\equiv E(Y^1-Y^0|A=1,S=1)=E\{E(Y^1-Y^0|A=1,Z, S=1,\mathbf{C})|A=1,S=1\}\] is identified as 
\begin{align*}
\psi^*=\int \beta(\mathbf{c}) dF(\mathbf{c}|A=1, S=1)
\end{align*}
where 
\begin{align*}
\beta(\mathbf{C})&\equiv \frac{\delta^Y(\mathbf{C})-t(1, \mathbf{C})}{\delta^A(\mathbf{C})}\\
\delta^Y(\mathbf{C})&\equiv E(Y|Z=1,S=1,\mathbf{C})-E(Y|Z=0,S=1,\mathbf{C})\\
\delta^A(\mathbf{C})&\equiv f(A=1|Z=1,S=1,\mathbf{C})-f(A=1|Z=0,S=1, \mathbf{C}).
\end{align*}
We therefore have a closed form representation of $\psi^*$ as a functional of the observed data distribution,  Since $\psi^*$ is a pathwise differentiable parameter,  we can then obtain nonparametric inference on $\psi$,  by obtaining the efficient influence function under a nonparametric model.  The following results may be of independent interest with respect to the literature of nonparametric inference in the IV model \citep{wang2018bounded}.

\begin{theorem}\label{np_inf}
 In the nonparametric model defined by Assumptions \ref{consist}-\ref{pem},  the efficient influence function is equal to:
\begin{align*}
EIF_1(\psi^*)=&\frac{f(A=1,S=1|\mathbf{C})(-1)^{Z+S}}{f(A=1,S=1)\delta^A(\mathbf{C})f(Z,S|\mathbf{C})}\\&\times[Y-\beta(\mathbf{C})\{A-f(A=1|Z=0,S=1,\mathbf{C})\}S-t(1,\mathbf{C})Z-E(Y|Z=0,S,\mathbf{C})]\\
&+\frac{1}{f(A=1,S=1)}AS\left\{\beta(\mathbf{C})-\psi^*\right\}.
\end{align*}
\end{theorem}
A proof can be found in section \ref{sm:proof_np}. One can use these results to create estimators of $\psi^*$,  after obtaining estimates of the nuisance parameters involved.

\section{Proofs}\label{sm:proofs}

\subsection{Proof of Proposition \ref{main_prop}}\label{sm:proof_prop}
\begin{proof}
For any $z\in \{0,1\}$ and $\mathbf{c}\in \mathcal{C}_0\cap \mathcal{C}_1$, 
\begin{align*}
E(Y|Z=z,S=0,\mathbf{C}=\mathbf{c})&=E(Y^{0}|Z=z,S=0,\mathbf{C}=\mathbf{c}) & \textrm{(Assumption \ref{ref})} \\
&=E(Y^{0,z}|Z=z,S=0,\mathbf{C}=\mathbf{c}) &\textrm{(Instrument consistency)}\\
&=E(Y^{0,z}|S=0,\mathbf{C}=\mathbf{c}) & (\textrm{Unconfoundedness})\\
&=E(Y^{0}|S=0,\mathbf{C}=\mathbf{c}) & (\textrm{Exclusion restriction})
\end{align*}
where $E(Y^{0}|S=0,\mathbf{C}=\mathbf{c})$ is not a function of $z$.
\end{proof}

\subsection{Proof of Theorem \ref{iden}}\label{sm:proof_iden}

\begin{proof}
Note that $\forall z\in \{0,1\}$ and $\forall \mathbf{c} \in \mathcal{C}_0\cap \mathcal{C}_1$
\begin{align*}
E(Y|A=a,z,S=1,\mathbf{c})=&\beta(a,z,\mathbf{c})+\gamma(z,\mathbf{c})\{a-f(A=1|Z=z,S=1,\mathbf{c})\}\\&+t(z,\mathbf{c})+E(Y^{0}|Z=0,S=1,\mathbf{c})
\end{align*}
\citep{tchetgen2013alternative}. 
To obtain identification, the right hand side must be reduced by two parameters. 

First, 
\begin{align*}
t(1,\mathbf{c})&=
E(Y^0|Z=1,S=0,\mathbf{C}=\mathbf{c})-E(Y^0|Z=0,S=0,\mathbf{C}=\mathbf{c}) &\textrm{(Assumption \ref{partial_pe})}\\
&=E(Y|Z=1,S=0,\mathbf{C}=\mathbf{c})-E(Y|Z=0,S=0,\mathbf{C}=\mathbf{c}). &\textrm{(Assumption \ref{ref})} 
\end{align*}
Then we have 
\begin{align*}
&E(Y|A=a,z,S=1,\mathbf{c})\\&=\beta(a,z,\mathbf{c})+\gamma(z,\mathbf{c})\{a-f(A=1|Z=z,S=1,\mathbf{c})\}\\&\quad +E(Y|Z=z,S=0,\mathbf{C}=\mathbf{c})-E(Y|Z=0,S=0,\mathbf{C}=\mathbf{c})+E(Y^{0}|Z=0,S=1,\mathbf{c}).
\end{align*}
Then either NEM or NSM can be applied to remove a further degree-of-freedom. The remaining results follow via rearranging the above formula as an expression for $\beta(1,z,\mathbf{c})$; see also the Appendix of \citet{richardson2021bespoke} where the second result (under NEM) is shown. 
\end{proof}

\subsection{Proof of Theorem \ref{if_space}}\label{sm:proof_if}

\begin{proof}
We introduce some additional notation:
\begin{align*}
\epsilon_{1}&\equiv Y-\beta(\mathbf{A,Z,C})-[q(\mathbf{A,Z,C})-E\{q(\mathbf{A,Z,C})|\mathbf{Z},S=1,\mathbf{C}\}]-b^*_{1}(\mathbf{C})-t(\mathbf{Z,C})\\
\epsilon^*_{1}&\equiv \epsilon_{1}+[q(\mathbf{A,Z,C})-E\{q(\mathbf{A,Z,C})|\mathbf{Z},S=1,\mathbf{C}\}]\\
\epsilon_{0}&\equiv Y-b_{0}(\mathbf{C})-t(\mathbf{Z,C})
\end{align*}
where $b^*_1(\mathbf{C})=b_1(\mathbf{C})+b_0(\mathbf{C})$. Note that in order to obtain identification,  we enforce that $\beta(\mathbf{A,Z,C})=\beta(\mathbf{A,C})$.

The likelihood for a given observation can be written as 
\[f(\mathbf{O})=f(\epsilon_0|\mathbf{Z},S=0,\mathbf{C})^{(1-S)}\left\{f(\epsilon_1|\mathbf{A},\mathbf{Z},S=1,\mathbf{C})f(\mathbf{A}|\mathbf{Z},S=1,\mathbf{C})\right\}^{S}f(\mathbf{Z},S|\mathbf{C})f(\mathbf{C}).\]
Then we will consider the parametric submodel:
\[f_r(\mathbf{O})=f_r(\epsilon_{0_r}|\mathbf{Z},S=0,\mathbf{C})^{(1-S)}\left\{f_r(\epsilon_{1_r}|\mathbf{A,Z},S=1,\mathbf{C})f_r(A|\mathbf{Z},S=1,\mathbf{C})\right\}^{S}f_r(\mathbf{Z},S|\mathbf{C})f_r(S,\mathbf{C})\]
which varies in the direction of $q_r(A,\mathbf{Z})$,  $b_{0_r}(\mathbf{C})$,  $b^*_{1_r}(\mathbf{C})$, $t_r(\mathbf{Z})$,  $f_r(\mathbf{A}|\mathbf{Z},S=1,\mathbf{C})$,  $f_r(\mathbf{Z},S|\mathbf{C})$ and $f_r(\mathbf{C})$.  Here,
\begin{align*}
\epsilon_{1_r}&=Y-\beta(\mathbf{A,Z,C})-[q_r(\mathbf{A,Z,C})-E_r\{q_r(\mathbf{A,Z,C})|\mathbf{Z},S=1,\mathbf{C}\}]-b^*_{1_r}(\mathbf{C})-t_r(\mathbf{Z,C})\\
\epsilon_{0_r}&=Y-b_{0_r}(\mathbf{C})-t_r(\mathbf{Z,C}).
\end{align*}

The nuisance tangent space $\lambda_{nuis}$ under $\mathcal{M}_{NEM}$ can be characterised as 
\[\lambda_{nuis}=\lambda_{nuis_1}\oplus\lambda_{nuis_2}\oplus\lambda_{nuis_3}\oplus\lambda_{nuis_4}\oplus\lambda_{nuis_5}\oplus\lambda_{nuis_6}\oplus\lambda_{nuis_7}\oplus\lambda_{nuis_8}\oplus\lambda_{nuis_9}\]
where
\begin{align*}
\Lambda_{nuis_1}\equiv \bigg\{&Sd_1(\epsilon_1,\mathbf{A,Z,C}):\\& E\{ d_1(\epsilon_1,\mathbf{A,Z,C})|\mathbf{A,Z},S=1,\mathbf{C}\}=E\{\epsilon_1d_1(\epsilon_1,\mathbf{A,Z,C})|\mathbf{A,Z},S=1,\mathbf{C}\}=0 \bigg\}\cap L_2^0\\
\Lambda_{nuis_2}\equiv \bigg\{&(1-S)d_2(\epsilon_0,\mathbf{Z,C}): E\{d_2(\epsilon_0,\mathbf{Z,C})|\mathbf{Z},S=0,\mathbf{C}\}=E\{ \epsilon_0d_2(\epsilon_1,\mathbf{Z})|\mathbf{Z},S=0,\mathbf{C}\}=0 \bigg\}\cap L_2^0\\
\Lambda_{nuis_3}\equiv \bigg\{&d_3(\mathbf{Z},S,\mathbf{C}):E\{d_3(\mathbf{Z},S,\mathbf{C})|\mathbf{C}\}=0 \bigg\}\cap L_2^0\\
\Lambda_{nuis_4}\equiv \bigg\{&d_4(\mathbf{C}):E\{d_4(\mathbf{C})\}=0 \bigg\}\cap L_2^0\\
\Lambda_{nuis_5}\equiv \bigg\{&S[d_5(\mathbf{A,Z,C})-E\{d_5(\mathbf{A,Z,C})|\mathbf{Z},S=1,\mathbf{C}\}]f'_{\epsilon_{1}}: \textrm{$d_5(\mathbf{A,Z,C})$ unrestricted}\bigg\}\cap L_2^0\\
\Lambda_{nuis_6}\equiv \bigg\{&Sd_6(A, \mathbf{Z,C})+Sf'_{\epsilon_{1}}\int d_6(A^*, \mathbf{Z,C})q(A^*,\mathbf{Z,C})dF(A^*|\mathbf{Z},S=1,\mathbf{C}):\\&  E\{d_6(A, \mathbf{Z,C})|\mathbf{Z},S=1,\mathbf{C}\}=0\bigg\}\cap L_2^0\\
\Lambda_{nuis_7}\equiv \bigg\{&Sd_7(\mathbf{C})f'_{\epsilon_{1}}:  \textrm{$d_7(\mathbf{C})$ unrestricted}\bigg\}\cap L_2^0\\
\Lambda_{nuis_8}\equiv \bigg\{&(1-S)d_8(\mathbf{C})f'_{\epsilon_0}: \textrm{$d_8(\mathbf{C})$ unrestricted}\bigg\}\cap L_2^0\\
\Lambda_{nuis_9}\equiv \bigg\{&Sd_9(\mathbf{Z,C})f'_{\epsilon_{1}}+(1-S)d_9(\mathbf{Z,C})f'_{\epsilon_0}: \textrm{$d_9(\mathbf{C})$ unrestricted}\bigg\}\cap L_2^0
\end{align*}
where $f'_{\epsilon_{1}}$ is the derivative of $f(\epsilon_{1}|A,\mathbf{Z},S=1,\mathbf{C})$ w.r.t. $\epsilon_1$ and likewise $f'_{\epsilon_{0}}$ is the derivative of $f(\epsilon_0|\mathbf{Z},S=0,\mathbf{C})$ w.r.t. $\epsilon_0$.

Using standard results on the restricted mean model e.g. \citet{tsiatis2007semiparametric}, we have that 
\begin{align*}
\Lambda^\perp_{nuis_1}\cap \Lambda^\perp_{nuis_3}\cap  \Lambda^\perp_{nuis_4} = \bigg\{&S\epsilon_1c_1(\mathbf{A,Z,C})+Sc_2(\mathbf{A,Z},\mathbf{C}):\\&\textrm{$c_1(\mathbf{A,Z,C})$ unrestricted},E\{c_2(\mathbf{A,Z},\mathbf{C})|\mathbf{Z},S=1,\mathbf{C}\}=0 \bigg\}\cap L_2^0
\end{align*}
Then to find elements in this space that are orthogonal to $\Lambda_{nuis_5}$,  we must find the elements in $\Lambda^\perp_{nuis_1}\cap \Lambda^\perp_{nuis_3}\cap  \Lambda^\perp_{nuis_4}$ that satisfy:
\begin{align*}
0&=E\left(S\{\epsilon_1c_1(\mathbf{A,Z,C})+c_2(\mathbf{A,Z},\mathbf{C})\}[d_5(\mathbf{A,Z,C})-E\{d_5(\mathbf{A,Z,C})|\mathbf{Z},S=1,\mathbf{C}\}] f'_{\epsilon_{1}}\right)\\
&=E\left(Sc_1(\mathbf{A,Z,C})[d_5(\mathbf{A,Z,C})-E\{d_5(\mathbf{A,Z,C})|\mathbf{Z},S=1,\mathbf{C}\}]\right).
\end{align*}
It follows that elements of $Sc_1(\mathbf{A,Z,C})$ that will satisfy the equality are of the form $S\phi_1(\mathbf{Z,C})$, where $\phi_1(\mathbf{Z,C})$ are unrestricted.  One can also show that the elements 
\[S\epsilon_1[\phi_1(\mathbf{Z,C})-E\{\phi_1(\mathbf{Z,C})|S=1,\mathbf{C}\}]+Sc_2(\mathbf{A,Z},\mathbf{C})\]
are orthogonal to $\Lambda_{nuis_7}$.

Next, in considering $\Lambda_{nuis_6}$,  we need elements that satisfy:
\begin{align*}
0=E\bigg\{&S\left(\epsilon_1[\phi_1(\mathbf{Z,C})-E\{\phi_1(\mathbf{Z,C})|S=1,\mathbf{C}\}]+c_2(\mathbf{A,Z},\mathbf{C})\right)\\
&\times [d_6(A, \mathbf{Z,C})-E\{Sd_6(A, \mathbf{Z,C})q(\mathbf{A,Z,C})|\mathbf{Z},S=1,\mathbf{C}\}]f'_{\epsilon_{1}}\bigg\}
\end{align*}
Then note that
\begin{align*}
&E\bigg\{S\epsilon_1[\phi_1(\mathbf{Z,C})-E\{\phi_1(\mathbf{Z,C})|S=1,\mathbf{C}\}]d_6(\mathbf{A,Z,C})\bigg\}
=0\\
&E\bigg\{-Sc_2(\mathbf{A,Z},\mathbf{C})E\{d_6(A, \mathbf{Z,C})q(\mathbf{A,Z,C})|\mathbf{Z},S=1,\mathbf{C}\}f'_{\epsilon_{1}} \bigg\}
=0
\end{align*}
We are left with the restriction that 
\begin{align*}
0=E\bigg(& Sc_2(\mathbf{A,Z},\mathbf{C})d_6(\mathbf{A,Z,C})\\&-S\epsilon_1[\phi_1(\mathbf{Z,C})-E\{\phi_1(\mathbf{Z,C})|S=1,\mathbf{C}\}]E\{d_6(A, \mathbf{Z,C})q(\mathbf{A,Z,C})|\mathbf{Z},S=1,\mathbf{C}\}f'_{\epsilon_{1}} \bigg)\\
=E\bigg(& Sd_6(\mathbf{A,Z,C})\\&\times \left(c_2(\mathbf{A,Z},\mathbf{C})-[\phi_1(\mathbf{Z,C})-E\{\phi_1(\mathbf{Z,C})|S=1,\mathbf{C}\}][q(\mathbf{A,Z,C})-E\{q(\mathbf{A,Z,C})|\mathbf{Z},S=1,\mathbf{C}\}]\right)\bigg)
\end{align*}
and so 
\[c_2(\mathbf{A,Z},\mathbf{C})=[\phi_1(\mathbf{Z,C})-E\{\phi_1(\mathbf{Z,C})|S=1,\mathbf{C}\}][q(\mathbf{A,Z,C})-E\{q(\mathbf{A,Z,C})|\mathbf{Z},S=1,\mathbf{C}\}].\]
Hence,  so far we have shown that 
\begin{align*}
&\Lambda^\perp_{nuis_1}\cap \Lambda^\perp_{nuis_3}\cap \Lambda^\perp_{nuis_4} \cap \Lambda^\perp_{nuis_5} \cap \Lambda^\perp_{nuis_6} \cap \Lambda^\perp_{nuis_7}\\ &=\bigg\{S\epsilon^*_1[\phi_1(\mathbf{Z,C})-E\{\phi_1(\mathbf{Z,C})|S=1,\mathbf{C}\}]:
\textrm{$\phi_1(\mathbf{Z,C})$ unrestricted} \bigg\}\cap L_2^0.
\end{align*}

Considering elements in $\Lambda^\perp_{nuis_2}\cap\Lambda^\perp_{nuis_8}$,  then using previous reasoning,  we have that 
\[\Lambda^\perp_{nuis_2}\cap\Lambda^\perp_{nuis_8}=\left\{(1-S)\epsilon_0[\phi_2(\mathbf{Z,C})-E\{\phi_2(\mathbf{Z,C})|\mathbf{C},S=0\}]; \textrm{$\phi_2(\mathbf{Z,C})$ unrestricted}\right\}.\]
It is also straightforward to show that elements in the space $\Lambda^\perp_{nuis_2}\cap\Lambda^\perp_{nuis_8}$ are orthogonal to those in the space 
$\Lambda^\perp_{nuis_1}\cap \Lambda^\perp_{nuis_3}\cap \Lambda^\perp_{nuis_4} \cap \Lambda^\perp_{nuis_5} \cap \Lambda^\perp_{nuis_6} \cap \Lambda^\perp_{nuis_7}$,  and therefore 
\begin{align*}
&\Lambda^\perp_{nuis_1}\cap\Lambda^\perp_{nuis_2}\cap \Lambda^\perp_{nuis_3}\cap \Lambda^\perp_{nuis_4} \cap \Lambda^\perp_{nuis_5} \cap \Lambda^\perp_{nuis_6} \cap \Lambda^\perp_{nuis_7} \cap \Lambda^\perp_{nuis_8}\\
&=\bigg\{S\epsilon^*_1[\phi_1(\mathbf{Z,C})-E\{\phi_1(\mathbf{Z,C})|S=1,\mathbf{C}\}]\\&\quad+(1-S)\epsilon_0[\phi_2(\mathbf{Z,C})-E\{\phi_2(\mathbf{Z,C})|\mathbf{C},S=0\}]: \textrm{$\phi_1(\mathbf{Z,C}),\phi_2(\mathbf{Z,C})$ unrestricted}\bigg\}.
\end{align*}
It remains to find elements in this space that are orthogonal to $\Lambda^\perp_{nuis_9}$; these must satisfy
\begin{align*}
0=&E\bigg\{\bigg(S\epsilon^*_1[\phi_1(\mathbf{Z,C})-E\{\phi_1(\mathbf{Z,C})|S=1,\mathbf{C}\}]+(1-S)\epsilon_0[\phi_2(\mathbf{Z,C})-E\{\phi_2(\mathbf{Z,C})|S=0,\mathbf{C}\}]
\bigg)\\&\times\{Sd_9(\mathbf{Z,C})f'_{\epsilon_{1}}+(1-S)d_9(\mathbf{Z,C})f'_{\epsilon_{0}}\}\bigg)\\
=&E\bigg\{\bigg(S[\phi_1(\mathbf{Z,C})-E\{\phi_1(\mathbf{Z,C})|S=1,\mathbf{C}\}] +(1-S)[\phi_2(\mathbf{Z,C})-E\{\phi_2(\mathbf{Z,C})|S=0,\mathbf{C}\}]\bigg)
d_9(\mathbf{Z,C})\bigg\}.
\end{align*}
It follows from \citet{tchetgen2010doubly} that the space of elements 
\[S[\phi_1(\mathbf{Z,C})-E\{\phi_1(\mathbf{Z,C})|S=1,\mathbf{C}\}]+(1-S)[\phi_2(\mathbf{Z,C})-E\{\phi_2(\mathbf{Z,C})|\mathbf{C},S=0\}]\]
that satisfy this equality can be represented as 
\[\Omega=\{\phi(\mathbf{Z},S,\mathbf{C}):E\{\phi(\mathbf{Z},S,\mathbf{C})|\mathbf{Z,C}\}=E\{\phi(\mathbf{Z},S,\mathbf{C})|S,\mathbf{C}\}=0\}.\] 
Then the main result follows. 
\end{proof}

\subsection{Proof of Theorem \ref{triply}}\label{sm:proof_triply}
\begin{proof}

We will first show that
\[E\{\phi(\mathbf{Z},S,\mathbf{C};\tau^\dagger,\alpha^\dagger,\rho^\dagger)\epsilon^*(\psi^\dagger,\nu^\dagger,\theta^\dagger)\}\]
evaluated at the limiting (rather than estimated) values is an unbiased estimating function under the union model. Under model $\mathcal{M}_{NEM}\cap\mathcal{M}_1$, by the law of iterated expectation,
\begin{align*}
&E\{\phi(\mathbf{Z},S,\mathbf{C};\tau^\dagger,\alpha^\dagger,\rho^\dagger)\epsilon^*(\psi^\dagger,\nu^\dagger,\theta^\dagger)\}\\
&=E\bigg(\phi(\mathbf{Z},S,\mathbf{C};\tau^\dagger,\alpha^\dagger,\rho^\dagger)[b_0(\mathbf{C})+b_1(\mathbf{C})S+t(\mathbf{Z,C})-b_0(\mathbf{C};\theta_0^\dagger)-b_1(\mathbf{C};\theta_1^\dagger)S-t(\mathbf{Z,C};\nu^\dagger)\\&\quad+q(\mathbf{A,Z,C})-E\{q(\mathbf{A,Z,C})|\mathbf{Z},S=1,\mathbf{C}\}]\bigg)\\
&=E\left[\phi(\mathbf{Z},S,\mathbf{C};\tau^\dagger,\alpha^\dagger,\rho^\dagger)\{b_0(\mathbf{C})+b_1(\mathbf{C})S+t(\mathbf{Z,C})-b_0(\mathbf{C};\theta_0^\dagger)-b_1(\mathbf{C};\theta_1^\dagger)S-t(\mathbf{Z,C};\nu^\dagger)\}\right]=0.
\end{align*}

Under model $\mathcal{M}_{NEM}\cap\mathcal{M}_2$, looking back to the final line of the previous expression,
\begin{align*}
&E\left[\phi(\mathbf{Z},S,\mathbf{C};\tau^\dagger,\alpha^\dagger,\rho^\dagger)\{b_0(\mathbf{C})+b_1(\mathbf{C})S+t(\mathbf{Z,C})-b_0(\mathbf{C};\theta_0^\dagger)-b_1(\mathbf{C};\theta_1^\dagger)S-t(\mathbf{Z,C};\nu^\dagger)\}\right]\\
&=E\left[\phi(\mathbf{Z},S,\mathbf{C};\tau^\dagger,\alpha^\dagger,\rho^\dagger)\{b_0(\mathbf{C})+b_1(\mathbf{C})S-b_0(\mathbf{C};\theta_0^\dagger)-b_1(\mathbf{C};\theta_1^\dagger)S\}\right]\\
&=E\left[E\{\phi(\mathbf{Z},S,\mathbf{C};\tau^\dagger,\alpha^\dagger,\rho^\dagger)|S,\mathbf{C}\}\{b_0(\mathbf{C})+b_1(\mathbf{C})S-b_0(\mathbf{C};\theta_0^\dagger)-b_1(\mathbf{C};\theta_1^\dagger)S\}\right]=0
\end{align*}
where the final equality follows since $f(\mathbf{Z}|S,\mathbf{C})=f(\mathbf{Z}|S,\mathbf{C};\tau^\dagger,\rho^\dagger)$.

Under model $\mathcal{M}_{NEM}\cap\mathcal{M}_3$,
\begin{align*}
&E\left[\phi(\mathbf{Z},S,\mathbf{C};\tau^\dagger,\alpha^\dagger,\rho^\dagger)\{b_0(\mathbf{C})+b_1(\mathbf{C})S+t(\mathbf{Z,C})-b_0(\mathbf{C};\theta_0^\dagger)-b_1(\mathbf{C};\theta_1^\dagger)S-t(\mathbf{Z,C};\nu^\dagger)\}\right]\\&=E\left[E\{\phi(\mathbf{Z},S,\mathbf{C};\tau^\dagger,\alpha^\dagger,\rho^\dagger)|\mathbf{Z,C}\}\{b_0(\mathbf{C})+t(\mathbf{Z,C})-b_0(\mathbf{C};\theta_0^\dagger)-t(\mathbf{Z,C};\nu^\dagger)\}\right]=0
\end{align*}
where the final equality follows since $f(S|\mathbf{Z,C})=f(S|\mathbf{Z,C};\alpha^\dagger,\rho^\dagger)$.

Finally, under model $\mathcal{M}_{NEM}\cap\mathcal{M}_4$,
\begin{align*}
&E\left[\phi(\mathbf{Z},S,\mathbf{C};\tau^\dagger,\alpha^\dagger,\rho^\dagger)\{b_0(\mathbf{C})+b_1(\mathbf{C})S+t(\mathbf{Z,C})-b_0(\mathbf{C};\theta_0^\dagger)-b_1(\mathbf{C};\theta_1^\dagger)S-t(\mathbf{Z,C};\nu^\dagger)\}\right]\\&=
E\left[E\{\phi(\mathbf{Z},S,\mathbf{C};\tau^\dagger,\alpha^\dagger,\rho^\dagger)|S,\mathbf{C}\}\{b_1(\mathbf{C})S-b_1(\mathbf{C};\theta_1^\dagger)S\}\right]\\
&\quad+E\left[E\{\phi(\mathbf{Z},S,\mathbf{C};\tau^\dagger,\alpha^\dagger,\rho^\dagger)|\mathbf{Z,C}\}\{t(\mathbf{Z,C})-t(\mathbf{Z,C};\nu^\dagger)\}\right]\\
&\quad+E\left\{\phi(\mathbf{Z},S,\mathbf{C};\tau^\dagger,\alpha^\dagger,\rho^\dagger)\{b_0(\mathbf{C})-b_0(\mathbf{C};\theta_0^\dagger)\}\right\}=0.
\end{align*}

Note furthermore that
\begin{itemize}
    \item $f(\mathbf{Z}|S,\mathbf{C};\hat{\tau},\hat{\rho})$ is a CAN estimator of $f(\mathbf{Z}|S,\mathbf{C})$ in model $\mathcal{M}_2\cup\mathcal{M}_4$.
    \item $f(S|\mathbf{Z,C};\hat{\alpha},\hat{\rho})$ is a CAN estimator of $f(S|\mathbf{Z,C})$ in model $\mathcal{M}_3\cup\mathcal{M}_4$.
    \item $t(\mathbf{Z,C};\hat{\nu})$ is a CAN estimator of $t(\mathbf{Z,C})$ in model $\mathcal{M}_{NEM}\cap(\mathcal{M}_1\cup\mathcal{M}_2)$.
    \item $b_1(\mathbf{C};\hat{\theta}_1)$ is a CAN estimator of $b_1(\mathbf{C})$ in model $\mathcal{M}_{NEM}\cap(\mathcal{M}_1\cup\mathcal{M}_3)$.
    \item  $b_0(\mathbf{C};\hat{\theta}_0)$ is a CAN estimator of $b_0(\mathbf{C})$ in model $\mathcal{M}_{NEM}\cap(\mathcal{M}_1\cup\mathcal{M}_2\cup\mathcal{M}_3)$.
\end{itemize}
These results follow from \citet{robins1992estimating} and \citet{tchetgen2010doubly}. This completes the proof.
\end{proof}

\subsection{Proof of Theorem \ref{theorem:test}}\label{proof_test}

\begin{proof}
The conditional mean $E(Y|A=a,Z=z,S=1,\mathbf{C}=\mathbf{c})$ is equal to 
\begin{align*}
&E(Y|A=a,Z=z,S=1,\mathbf{c})\\&=
E(Y^{1}-Y^{0}|A=1,Z=z,S=1,\mathbf{c})a\\
&\quad +\gamma(z,\mathbf{c})\{a-Pr(A=1|Z=z,S=1,\mathbf{c})\}\\
&\quad +t(z,\mathbf{c})+E(Y^{0}|Z=0,S=1,\mathbf{c}).
\end{align*}
Note that 
\begin{align*}
E(Y^1-Y^0|A=1,Z=z,S=1,\mathbf{C}=\mathbf{c})&=0 \quad &(\textrm{Under \eqref{null_hyo}})\\
E(Y|Z=z,S=0,\mathbf{c})-E(Y|Z=0,S=0,\mathbf{c})&=t(z,\mathbf{c}). \quad &(\textrm{Assumptions }\ref{ref} \textrm{ and } \ref{partial_pe})
\end{align*}
Then after recentering $Y$ and averaging over $A$,
\begin{align*}
&E\{Y-E(Y|Z=z,S=0,\mathbf{c})+E(Y|Z=0,S=0,\mathbf{c})|Z=z,S=1,\mathbf{c}\}\\
&=E(Y^{0}|Z=0,S=1,\mathbf{c})
\end{align*}
which is not a function of $z$.
\end{proof}

\subsection{Proof of Theorem \ref{if_space_b}}\label{sm:proof_if_nsm}

\begin{proof}
Many of the steps in in this proof follow along the lines of the previous proof and can therefore be omitted for brevity. We note however that NSM Assumption (\ref{psb_gen}) imposes different restrictions on the nuisance tangent space, such that we redefine $\Lambda_{nuis_5}$ as
\begin{align*}
\Lambda_{nuis_5}= \bigg\{&S[d_5(\mathbf{A,C})-E\{d_5(\mathbf{A,C})|\mathbf{Z},S=1,\mathbf{C}\}]f'_{\epsilon_{1}}:d_5(\mathbf{A,C}) \textrm{ unrestricted} \bigg\}\cap L_2^0.
\end{align*}

Along the lines of the proof of Theorem 4 in \citet{tchetgen2013alternative}, one can show that 
\begin{align*}
&\Lambda^\perp_{nuis_1}\cap \Lambda^\perp_{nuis_3}\cap \Lambda^\perp_{nuis_4} \cap \Lambda^\perp_{nuis_5} \cap \Lambda^\perp_{nuis_6} \cap \Lambda^\perp_{nuis_7}\\ &=\bigg\{S\epsilon_1[\kappa(\mathbf{A,Z,C})+\phi_1(\mathbf{Z,C})-E\{\phi_1(\mathbf{Z,C})|S=1,\mathbf{C}\}]\\
&\quad\quad+S[q(\mathbf{A,Z,C})-E\{q(\mathbf{A,Z,C})|\mathbf{Z},S=1,\mathbf{C}\}][\phi_1(\mathbf{Z,C})-E\{\phi_1(\mathbf{Z,C})|S=1,\mathbf{C}\}]:\\&\quad\quad 
\textrm{$\kappa(\mathbf{A,Z,C}) \in \Gamma$,  $\phi_1(\mathbf{Z,C})$ unrestricted} \bigg\}\cap L_2^0
\end{align*}
and as in the previous proof, 
\[\Lambda^\perp_{nuis_2}\cap\Lambda^\perp_{nuis_8}=\left\{(1-S)\epsilon_0[\phi_2(\mathbf{Z,C})-E\{\phi_2(\mathbf{Z,C})|\mathbf{C},S=0\}]: \textrm{$\phi_2(\mathbf{Z,C})$ unrestricted}\right\}\]
Then one can find the elements 
\begin{align*}
&S\epsilon_1[\kappa(\mathbf{A,Z,C})+\phi_1(\mathbf{Z,C})-E\{\phi_1(\mathbf{Z,C})|S=1,\mathbf{C}\}]\\
&+[q(\mathbf{A,Z,C})-E\{q(\mathbf{A,Z,C})|\mathbf{Z},S=1,\mathbf{C}\}][\phi_1(\mathbf{Z,C})-E\{\phi_1(\mathbf{Z,C})|S=1,\mathbf{C}\}]\\
&+(1-S)\epsilon_0[\phi_2(\mathbf{Z,C})-E\{\phi_2(\mathbf{Z,C})|\mathbf{C},S=0\}]
\end{align*}
that are orthogonal to $\lambda_{nuis_9}$ as in the previous proof, and the main result follows.

\end{proof}

\subsection{Proof of Theorem \ref{triply2}}\label{sm:proof_triply_nsm}

\begin{proof}
If we consider the estimating function
\[S\kappa(\mathbf{A,Z,C};\pi^\dagger,\tau^\dagger,\rho^\dagger)\epsilon(\psi^\dagger,\nu^\dagger,\theta^\dagger,\omega^\dagger,\pi^\dagger)+\phi(\mathbf{Z},S,\mathbf{C};\tau^\dagger,\alpha^\dagger,\rho^\dagger)\epsilon^*(\psi^\dagger,\nu^\dagger,\theta^\dagger)\]
then it follows from the previous proof that 
\[E\{\phi(\mathbf{Z},S,\mathbf{C};\tau^\dagger,\alpha^\dagger,\rho^\dagger)\epsilon^*(\psi^\dagger,\nu^\dagger,\theta^\dagger)\}=0\]
under the union model. 

Under model $\mathcal{M}_{NSM}\cap\mathcal{M}_a\cap\mathcal{M}_{1q}$,
\begin{align*}
&E\{S\kappa(\mathbf{A,Z,C};\pi^\dagger,\tau^\dagger,\rho^\dagger)\epsilon(\psi^\dagger,\nu^\dagger,\theta^\dagger,\omega^\dagger,\pi^\dagger)\}\\
&=E\bigg(S\kappa(\mathbf{A,Z,C};\pi^\dagger,\tau^\dagger,\rho^\dagger)[b_0(\mathbf{C})+b_1(\mathbf{C})S+t(\mathbf{Z,C})-b_0(\mathbf{C};\theta_0^\dagger)-b_1(\mathbf{C};\theta_1^\dagger)S-t(\mathbf{Z,C};\nu^\dagger)\\&\quad+q(\mathbf{A,C})-E\{q(\mathbf{A,C})|\mathbf{Z},S=1,\mathbf{C}\}-q(\mathbf{A,C};\omega^\dagger)+E\{q(\mathbf{A,C};\omega^\dagger)|\mathbf{Z},S=1,\mathbf{C};\pi^\dagger\}]\bigg)=0
\end{align*}
Under model $\mathcal{M}_{NSM}\cap\mathcal{M}_a\cap\mathcal{M}_{2}$,
\begin{align*}
&E\bigg(S\kappa(\mathbf{A,Z,C};\pi^\dagger,\tau^\dagger,\rho^\dagger)[b_0(\mathbf{C})+b_1(\mathbf{C})S+t(\mathbf{Z,C})-b_0(\mathbf{C};\theta_0^\dagger)-b_1(\mathbf{C};\theta_1^\dagger)S-t(\mathbf{Z,C};\nu^\dagger)\\&\quad+q(\mathbf{A,C})-E\{q(\mathbf{A,C})|\mathbf{Z},S=1,\mathbf{C}\}-q(\mathbf{A,C};\omega^\dagger)+E\{q(\mathbf{A,C};\omega^\dagger)|\mathbf{Z},S=1,\mathbf{C};\pi^\dagger\}]\bigg)\\
&=E\bigg(S\kappa(\mathbf{A,Z,C};\pi^\dagger,\tau^\dagger,\rho^\dagger)[b_0(\mathbf{C})+b_1(\mathbf{C})S-b_0(\mathbf{C};\theta_0^\dagger)-b_1(\mathbf{C};\theta_1^\dagger)S-\\&\quad+q(\mathbf{A,C})-E\{q(\mathbf{A,C})|\mathbf{Z},S=1,\mathbf{C}\}-q(\mathbf{A,C};\omega^\dagger)+E\{q(\mathbf{A,C};\omega^\dagger)|\mathbf{Z},S=1,\mathbf{C};\pi^\dagger\}]\bigg)\\
&=E\left[SE\{\kappa(\mathbf{A,Z,C};\pi^\dagger,\tau^\dagger,\rho^\dagger)|S=1,\mathbf{C}\}\{b_0(\mathbf{C})+b_1(\mathbf{C})S-b_0(\mathbf{C};\theta_0^\dagger)-b_1(\mathbf{C};\theta_1^\dagger)S\}\right]\\
&\quad+E\left[SE\{\kappa(\mathbf{A,Z,C};\pi^\dagger,\tau^\dagger,\rho^\dagger)|\mathbf{A},S=1,\mathbf{C}\}\{q(\mathbf{A,C})-q(\mathbf{A,C};\omega^\dagger)\}\right]\\&
\quad-E\bigg(SE\{\kappa(\mathbf{A,Z,C};\pi^\dagger,\tau^\dagger,\rho^\dagger)|\mathbf{Z},S=1,\mathbf{C}\}[E\{q(\mathbf{A,C})|\mathbf{Z},S=1,\mathbf{C}\}\\&\quad-E\{q(\mathbf{A,C};\omega^\dagger)|\mathbf{Z},S=1,\mathbf{C};\pi^\dagger\}]\bigg)\\
&=0
\end{align*}
where the final equality follows since $f(\mathbf{A,Z}|S=1,\mathbf{C})=f(\mathbf{A,Z}|S=1,\mathbf{C};\pi^\dagger,\tau^\dagger,\rho^\dagger)$.

Under model $\mathcal{M}_{NSM}\cap\mathcal{M}_a\cap\mathcal{M}_{3q}$,
\begin{align*}
&E\bigg(S\kappa(\mathbf{A,Z,C};\pi^\dagger,\tau^\dagger,\rho^\dagger)[b_0(\mathbf{C})+b_1(\mathbf{C})S+t(\mathbf{Z,C})-b_0(\mathbf{C};\theta_0^\dagger)-b_1(\mathbf{C};\theta_1^\dagger)S-t(\mathbf{Z,C};\nu^\dagger)\\&\quad+q(\mathbf{A,C})-E\{q(\mathbf{A,C})|\mathbf{Z},S=1,\mathbf{C}\}-q(\mathbf{A,C};\omega^\dagger)+E\{q(\mathbf{A,C};\omega^\dagger)|\mathbf{Z},S=1,\mathbf{C};\pi^\dagger\}]\bigg)\\
&=E\left[S\kappa(\mathbf{A,Z,C};\pi^\dagger,\tau^\dagger,\rho^\dagger)\{b_0(\mathbf{C})+t(\mathbf{Z,C})-b_0(\mathbf{C};\theta_0^\dagger)-b_1(\mathbf{C};\theta_1^\dagger)S-t(\mathbf{Z,C};\nu^\dagger)\}\right]\\
&=E\left[SE\{\kappa(\mathbf{A,Z,C};\pi^\dagger,\tau^\dagger,\rho^\dagger)|\mathbf{Z},S=1,\mathbf{C}\}\{b_0(\mathbf{C})+t(\mathbf{Z,C})-b_0(\mathbf{C};\theta_0^\dagger)-b_1(\mathbf{C};\theta_1^\dagger)S-t(\mathbf{Z,C};\nu^\dagger)\}\right]\\&=0
\end{align*}
where the final equality follows since $f(\mathbf{A}|\mathbf{Z},S=1,\mathbf{C})=f(\mathbf{A}|\mathbf{Z},S=1,\mathbf{C};\pi^\dagger)$.

Under model $\mathcal{M}_{NSM}\cap\mathcal{M}_a\cap\mathcal{M}_{4}$, it follows from previous arguments that
\begin{align*}
&E\bigg(S\kappa(\mathbf{A,Z,C};\pi^\dagger,\tau^\dagger,\rho^\dagger)[b_0(\mathbf{C})+b_1(\mathbf{C})S+t(\mathbf{Z,C})-b_0(\mathbf{C};\theta_0^\dagger)-b_1(\mathbf{C};\theta_1^\dagger)S-t(\mathbf{Z,C};\nu^\dagger)\\&\quad+q(\mathbf{A,C})-E\{q(\mathbf{A,C})|\mathbf{Z},S=1,\mathbf{C}\}-q(\mathbf{A,C};\omega^\dagger)+E\{q(\mathbf{A,C};\omega^\dagger)|\mathbf{Z},S=1,\mathbf{C};\pi^\dagger\}]\bigg)=0
\end{align*}
since $f(\mathbf{A,Z}|S=1,\mathbf{C})=f(\mathbf{A,Z}|S=1,\mathbf{C};\pi^\dagger,\tau^\dagger,\rho^\dagger)$.

\end{proof}

\subsection{Proof of Theorem \ref{pem_speff}}\label{sm:proof_speff}

\begin{proof}
Under a restricted moment model where $t(\mathbf{Z,C})$, $b_0(\mathbf{C})$, $b_1(\mathbf{C})$ and $q(\mathbf{A,Z,C})$ are known, it follows e.g. from \citet{robins1994correcting} that the efficient score for $\psi^\dagger$ is equal to 
\[U_{RM-eff}=\mu(\mathbf{Z},S,\mathbf{C})\sigma^{-2}(\mathbf{Z},S,\mathbf{C})\epsilon^*(\psi^\dagger).\]
It also follows from the proofs of Theorem 1 and 4 in \citet{vansteelandt2008multiply} that the nuisance tangent space under $\mathcal{M}_{NEM}$ obtained in Theorem \ref{if_space} can be equivalently represented as 
\[\bigg\{\{e_0(\mathbf{C})+e_1(S,\mathbf{C})+e_2(\mathbf{Z,C})\}\sigma^{-2}(\mathbf{Z},S,\mathbf{C})\epsilon^*(\psi^\dagger):e_0\in \mathcal{C}_0,e_1\in \mathcal{C}_1,e_2\in \mathcal{C}_2\bigg\}\cap L_2^0\]
where $\mathcal{C}_0=\{e_0(\mathbf{C})\}\cap\mathcal{H}$, $\mathcal{C}_1=\{e_1(S,\mathbf{C})\}\cap\mathcal{H}$,  $\mathcal{C}_2=\{e_2(\mathbf{Z,C})\}\cap\mathcal{H}$ and $\mathcal{H}$ is the Hilbert space of functions of $\mathbf{Z},S,\mathbf{C}$ with inner product given by $E\{\sigma^{-2}(\mathbf{Z},S,\mathbf{C})e_1(\mathbf{Z},S,\mathbf{C})e_2(\mathbf{Z},S,\mathbf{C})\}$ for all $e_1, e_2$.

Then the efficient score under model $\mathcal{M}_{NEM}$ is equal to the following projection 
\begin{align*}
&\Pi_{L_2^0} \left(U_{RM-eff}|[\{e_0(\mathbf{C})+e_1(S,\mathbf{C})+e_2(\mathbf{Z,C})\}\sigma^{-2}(\mathbf{Z},S,\mathbf{C})\epsilon^*(\psi^\dagger)]^{\perp}:e_0\in \mathcal{C}_0,e_1\in \mathcal{C}_1,e_2\in \mathcal{C}_2 \right)\\
&=\Pi_{\mathcal{H}} \left(\mu(\mathbf{Z},S,\mathbf{C})|\{e_0(\mathbf{C})+e_1(S,\mathbf{C})+e_2(\mathbf{Z,C})\}^{\perp}:e_0\in \mathcal{C}_0,e_1\in \mathcal{C}_1,e_2\in \mathcal{C}_2 \right)\sigma^{-2}(\mathbf{Z},S,\mathbf{C})\epsilon^*(\psi^\dagger)
\end{align*}
where $\Pi_{L_2^0} (\cdot|\cdot)$ is the orthogonal projection operator in $L_2^0$, and $\Pi_{H} (\cdot|\cdot)$ is similarly defined w.r.t. $\mathcal{H}$. \citet{vansteelandt2008multiply} obtain a closed form representation of the projection in $\mathcal{H}$ of any function $D \in \mathcal{H}$ onto $(\mathcal{C}_0\cup \mathcal{C}_1 \cup \mathcal{C}_2)^\perp$ as 
\begin{align*}
J\left(D\right)-\frac{E\{D\sigma^{-2}(\mathbf{Z},S,\mathbf{C})\tilde{S}|\mathbf{C}\}}{E\{ \sigma^{-2}(\mathbf{Z},S,\mathbf{C})\tilde{S}^2|\mathbf{C}\}}\tilde{S}-J\left(\frac{E \{D\sigma^{-2}(\mathbf{Z},S,\mathbf{C})|\mathbf{Z,C}\}}{E\{ \sigma^{-2}(\mathbf{Z},S,\mathbf{C})|\mathbf{Z,C}\}}\right)
\end{align*}
where we define the $J(\cdot)$ operator on function $D^*$ as
\[J(D^*)=D^*-\frac{E\{D^*\sigma^{-2}(\mathbf{Z},S,\mathbf{C})|\mathbf{C}\}}{E\{\sigma^{-2}(\mathbf{Z},S,\mathbf{C})|\mathbf{C}\}}\]
and
\begin{align*}
\tilde{S}&\equiv S-\frac{E\{S\sigma^{-2}(\mathbf{Z},S,\mathbf{C})|\mathbf{Z,C}\}}{E\{\sigma^{-2}(\mathbf{Z},S,\mathbf{C})|\mathbf{Z,C}\}}\\
&=S-\frac{f(S=1|\mathbf{Z,C})\sigma^{-2}(\mathbf{Z},1,\mathbf{C})}{f(S=1|\mathbf{Z,C})\sigma^{-2}(\mathbf{Z},1,\mathbf{C})+f(S=0|\mathbf{Z,C})\sigma^{-2}(\mathbf{Z},0,\mathbf{C})}.
\end{align*}
Applying this result, we obtain the efficient score as 
\begin{align*}
&\left[J(\mu(\mathbf{Z},S,\mathbf{C}))-\frac{E\{\mu(\mathbf{Z},S,\mathbf{C})\sigma^{-2}(\mathbf{Z},S,\mathbf{C})\tilde{S}|\mathbf{C}\}}{E\{ \sigma^{-2}(\mathbf{Z},S,\mathbf{C})\tilde{S}^2|\mathbf{C}\}}\tilde{S}-J\left(\frac{E \{\mu(\mathbf{Z},S,\mathbf{C})\sigma^{-2}(\mathbf{Z},S,\mathbf{C})|\mathbf{Z,C}\}}{E\{ \sigma^{-2}(\mathbf{Z},S,\mathbf{C})|\mathbf{Z,C}\}}\right)\right]\\
&\times \sigma^{-2}(\mathbf{Z},S,\mathbf{C})\epsilon^*(\psi^\dagger)\\
&=\left[\mu(\mathbf{Z},S,\mathbf{C})-\frac{E \{\mu(\mathbf{Z},S,\mathbf{C})\sigma^{-2}(\mathbf{Z},S,\mathbf{C})|\mathbf{Z,C}\}}{E\{ \sigma^{-2}(\mathbf{Z},S,\mathbf{C})|\mathbf{Z,C}\}}-\frac{E\{\mu(\mathbf{Z},S,\mathbf{C})\sigma^{-2}(\mathbf{Z},S,\mathbf{C})\tilde{S}|\mathbf{C}\}}{E\{ \sigma^{-2}(\mathbf{Z},S,\mathbf{C})\tilde{S}^2|\mathbf{C}\}}\tilde{S}\right]\\
&\times \sigma^{-2}(\mathbf{Z},S,\mathbf{C})\epsilon^*(\psi^\dagger)
\end{align*}
where we use that 
\begin{align*}
&E\{\sigma^{-2}(\mathbf{Z},S,\mathbf{C})|\mathbf{C}\}^{-1}E\left[\sigma^{-2}(\mathbf{Z},S,\mathbf{C})\frac{E \{\mu(\mathbf{Z},S,\mathbf{C})\sigma^{-2}(\mathbf{Z},S,\mathbf{C})|\mathbf{Z,C}\}}{E\{ \sigma^{-2}(\mathbf{Z},S,\mathbf{C})|\mathbf{Z,C}\}}\bigg|\mathbf{C}\right]\\
&=E\{\sigma^{-2}(\mathbf{Z},S,\mathbf{C})|\mathbf{C}\}^{-1}E\left[E \{\mu(\mathbf{Z},S,\mathbf{C})\sigma^{-2}(\mathbf{Z},S,\mathbf{C})|\mathbf{Z,C}\}|\mathbf{C}\right]\\
&=E\{\sigma^{-2}(\mathbf{Z},S,\mathbf{C})|\mathbf{C}\}^{-1}E \{\mu(\mathbf{Z},S,\mathbf{C})\sigma^{-2}(\mathbf{Z},S,\mathbf{C})|\mathbf{C}\}.
\end{align*}

To further simply the previous expression for the efficient score,  
note that
\begin{align*}
\mu(\mathbf{Z},S,\mathbf{C})&=\mu(\mathbf{Z},1,\mathbf{C})S,\\
\frac{E\{\mu(\mathbf{Z},S,\mathbf{C})\sigma^{-2}(\mathbf{Z},S,\mathbf{C})|\mathbf{Z,C}\}}{E\{\sigma^{-2}(\mathbf{Z},S,\mathbf{C})|\mathbf{Z,C}\}}
&=\frac{\mu(\mathbf{Z},1,\mathbf{C})f(S=1|\mathbf{Z,C})\sigma^{-2}(\mathbf{Z},1,\mathbf{C})}{f(S=1|\mathbf{Z,C})\sigma^{-2}(\mathbf{Z},1,\mathbf{C})+f(S=0|\mathbf{Z,C})\sigma^{-2}(\mathbf{Z},0,\mathbf{C})}
\end{align*}
and thus
\[\mu(\mathbf{Z},S,\mathbf{C})-\frac{E\{\mu(\mathbf{Z},S,\mathbf{C})\sigma^{-2}(\mathbf{Z},S,\mathbf{C})|\mathbf{Z,C}\}}{E\{\sigma^{-2}(\mathbf{Z},S,\mathbf{C})|\mathbf{Z,C}\}}=\mu(\mathbf{Z},1,\mathbf{C})\tilde{S}.\]
Further,
\begin{align*}
\frac{E\{\mu(\mathbf{Z},S,\mathbf{C})\sigma^{-2}(\mathbf{Z},S,\mathbf{C})\tilde{S}|\mathbf{C}\}}{E\{\sigma^{-2}(\mathbf{Z},S,\mathbf{C})\tilde{S}^2|\mathbf{C}\}}=\frac{E\left\{\mu(\mathbf{Z},1,\mathbf{C})P_{\sigma}(\mathbf{Z,C})|\mathbf{C}\right\}}{E\left\{P_{\sigma}(\mathbf{Z,C})|\mathbf{C}\right\}}.
\end{align*}

Putting all this together, we have
\begin{align*}
&\left[\mu(\mathbf{Z},S,\mathbf{C})-\frac{E \{\mu(\mathbf{Z},S,\mathbf{C})\sigma^{-2}(\mathbf{Z},S,\mathbf{C})|\mathbf{Z,C}\}}{E\{ \sigma^{-2}(\mathbf{Z},S,\mathbf{C})|\mathbf{Z,C}\}}-\frac{E\{\mu(\mathbf{Z},S,\mathbf{C})\sigma^{-2}(\mathbf{Z},S,\mathbf{C})\tilde{S}|\mathbf{C}\}}{E\{ \sigma^{-2}(\mathbf{Z},S,\mathbf{C})\tilde{S}^2|\mathbf{C}\}}\tilde{S}\right]\\
&\times \sigma^{-2}(\mathbf{Z},S,\mathbf{C})\epsilon^*(\psi^\dagger)\\
&=\left[\mu(\mathbf{Z},1,\mathbf{C})-\frac{E\left\{\mu(\mathbf{Z},1,\mathbf{C})P_{\sigma}(\mathbf{Z,C})|\mathbf{C}\right\}}{E\left\{P_{\sigma}(\mathbf{Z,C})|\mathbf{C}\right\}}\right]\tilde{S}\sigma^{-2}(\mathbf{Z},S,\mathbf{C})\epsilon^*(\psi^\dagger)\\
&=\left[\mu(\mathbf{Z},1,\mathbf{C})-\frac{E\left\{\mu(\mathbf{Z},1,\mathbf{C})P_{\sigma}(\mathbf{Z,C})|\mathbf{C}\right\}}{E\left\{P_{\sigma}(\mathbf{Z,C})|\mathbf{C}\right\}}\right]P_{\sigma}(\mathbf{Z,C})\frac{(-1)^{1-S}}{f(S|\mathbf{Z},\mathbf{C})}\epsilon^*(\psi^\dagger)\\
&=\left[\mu(\mathbf{Z},1,\mathbf{C})-\frac{E\left\{\mu(\mathbf{Z},1,\mathbf{C})P_{\sigma}(\mathbf{Z,C})|\mathbf{C}\right\}}{E\left\{P_{\sigma}(\mathbf{Z,C})|\mathbf{C}\right\}}\right]P_{\sigma}(\mathbf{Z,C})\frac{(-1)^{1-S}f(\mathbf{Z}|\mathbf{C})}{f(S,\mathbf{Z}|\mathbf{C})}\epsilon^*(\psi^\dagger).
\end{align*}
This is of the form of equation \eqref{aid_rep_zs}, with admissible independence densities
\[
f^{\ddag }\left(\mathbf{Z},S|\mathbf{C}\right) =\left( \frac{1}{2}\right) ^{S}\left( \frac{1}{2}%
\right) ^{1-S}f(\mathbf{Z}|\mathbf{C})
\]%
and optimal index 
\[
r_{0}^{opt}(\mathbf{Z},S,\mathbf{C}) =\left\{\mu(\mathbf{Z},1,\mathbf{C})-\frac{E\left\{\mu(\mathbf{Z},1,\mathbf{C})P_{\sigma}(\mathbf{Z,C})|\mathbf{C}\right\}}{E\left\{P_{\sigma}(\mathbf{Z,C})|\mathbf{C}\right\}}\right\}2(-1)^{1-S}P_{\sigma }(\mathbf{Z,C}).
\]%
Note that        
\[
E^{\ddag }\left\{ r_{0}^{opt}(\mathbf{Z},S,\mathbf{C})|\mathbf{Z,C}\right\} =E^{\ddag }\left\{r_{0}^{opt}(\mathbf{Z},S,\mathbf{C})|S,\mathbf{C}\right\} =0
\]%
confirming that the optimal index function is in the correct subspace. 
Efficiency at the intersection submodel follows from general results in \citet{robins2001comment}.

If $\mathbf{Z}$ is binary, then one can obtain the efficient score by obtaining the optimal choice $m_{opt}(\mathbf{C})$ of $m(\mathbf{C})$ in 
\begin{align*}
\frac{m(\mathbf{C})(-1)^{Z+S}}{f(Z,S|\mathbf{C})}\epsilon^*(\psi^\dagger)
\end{align*}
which can be done via a population least squares projection of the score $U_{\psi^\dagger}$ for $\psi$ onto the above.
Thus,
\begin{align*}
 m_{opt}(\mathbf{C})&=E\left[\left\{\frac{(-1)^{Z+S}}{f(Z,S|\mathbf{C})}\right\}^2\sigma^{2}(Z,S,\mathbf{C})\bigg|C\right]^{-1}E\left\{U_{\psi^\dagger}\frac{(-1)^{Z+S}}{f(Z,S|\mathbf{C})}\epsilon^*(\psi^\dagger)\bigg|C\right\}\\
 &=E\left[\left\{\frac{(-1)^{Z+S}}{f(Z,S|\mathbf{C})}\right\}^2\sigma^{2}(Z,S,\mathbf{C})\bigg|C\right]^{-1}E\left\{E(U_{\psi^\dagger}\epsilon^*(\psi^\dagger)|Z,S,\mathbf{C})\frac{(-1)^{Z+S}}{f(Z,S|\mathbf{C})}\bigg|C\right\}\\
  &=E\left[\left\{\frac{(-1)^{Z+S}}{f(Z,S|\mathbf{C})}\right\}^2\sigma^{2}(Z,S,\mathbf{C})\bigg|C\right]^{-1}E\left\{\mu(Z,S,\mathbf{C})\frac{(-1)^{Z+S}}{f(Z,S|\mathbf{C})}\bigg|C\right\}\\
  &=E\left[\left\{\frac{(-1)^{Z+S}}{f(Z,S|\mathbf{C})}\right\}^2\sigma^{2}(Z,S,\mathbf{C})\bigg|C\right]^{-1}\{\mu(Z=1,S=1,\mathbf{C})-\mu(Z=0,S=1,\mathbf{C})\}.
\end{align*}
\end{proof}

\subsection{Proof of Theorem \ref{np_inf}}\label{sm:proof_np}

\begin{proof}
To obtain the efficient influence function for $\psi$,  we must find the random variable $G$ which satisfies:
\[\frac{\partial }{\partial r}\psi^*_r |_{r=0}=E\{G U(\mathbf{O};r)\}|_{r=0}\]
for $U(\mathbf{O};r)=\partial \log\{f(\mathbf{O};r)\}/\partial r$, where $f(\mathbf{O};r)$ is a one-dimensional regular parametric submodel of the nonparametric model for the observed data distribution.  By the product rule, 
\begin{align*}
\frac{\partial }{\partial r}\psi^*_r |_{r=0}=&\int \frac{\partial }{\partial r} \left(\frac{\delta^Y_r(\mathbf{c})-t_r(1,\mathbf{c})}{\delta^A_r(\mathbf{c})}\right)\bigg|_{r=0}dF(\mathbf{c}|A=1, S=1)\\
&+\int \beta(\mathbf{c})dF_r(\mathbf{c}|A=1, S=1)/\partial r|_{r=0}\\
&=(i)+(ii).
\end{align*}
If we first consider term $(ii)$:
\begin{align*}
(ii)&=E\left\{\beta(\mathbf{C})U(\mathbf{C}|A=1,S=1)|A=1,S=1\right\}\\
&=\frac{1}{f(A=1,S=1)}E\left\{AS\beta(\mathbf{C})U(\mathbf{C}|A,S)\right\}\\
&=\frac{1}{f(A=1,S=1)}E\left(AS\beta(\mathbf{C})[E\{U(\mathbf{O})|\mathbf{C},A,S\}-E\{U(\mathbf{O})|A,S\}]\right)\\
&=\frac{1}{f(A=1,S=1)}E\left[AS\left\{\beta(\mathbf{C})-\psi\right\}U(\mathbf{O})\right].
\end{align*}
Considering now term $(i)$,   then 
\begin{align*}
(i)&=\frac{1}{f(A=1,S=1)} E \left\{\frac{\partial }{\partial r}\frac{\delta_r^Y(\mathbf{C})-t_{r}(1,\mathbf{C})}{\delta^A_{r}(\mathbf{C})}|_{r=0}f(A=1,S=1|\mathbf{C})\right\}.
\end{align*}
Following the steps in the proof of Theorem 5 in  \citet{wang2018bounded}, 
\[\frac{\partial }{\partial r}E_r(Y|Z=z,S=s,\mathbf{C})|_{r=0}=E\left\{\frac{I(Z=z)I(S=s)}{f(Z,S|\mathbf{C})}\{Y-E(Y|Z,S,\mathbf{C})\}U(\mathbf{O})\bigg|\mathbf{C}\right\}\]
and therefore 
\[\frac{\partial }{\partial r}\{\delta_r^Y(\mathbf{C})-t_{r}(1,\mathbf{C})\}|_{r=0}=E\left\{\frac{(2Z-1)(2S-1)}{f(Z,S|\mathbf{C})}\{Y-E(Y|Z,S,\mathbf{C})\}U(\mathbf{O})\bigg|\mathbf{C}\right\}  \]
and 
\[\frac{\partial }{\partial r}\delta_r^A(\mathbf{C})|_{r=0}=E\left\{\frac{(2Z-1)S}{f(Z,S=1|\mathbf{C})}\{A-E(A|Z,S=1,\mathbf{C})\}U(\mathbf{O})\bigg|\mathbf{C}\right\}  \]
which gives us
\begin{align*}
(i)&=\frac{1}{f(A=1,S=1)} E \bigg[\frac{f(A=1,S=1|\mathbf{C})}{\delta^A(\mathbf{C})}\frac{(2Z-1)}{f(Z,S|\mathbf{C})} \\
&\quad\times \left\{(2S-1)\{Y-E(Y|Z,S,\mathbf{C})\}- S\beta(\mathbf{C})\{A-f(A=1|Z,S=1,\mathbf{C})\} \right\}U(\mathbf{O}) \bigg]\\
&=\frac{1}{f(A=1,S=1)} E \bigg\{\frac{f(A=1,S=1|\mathbf{C})}{\delta^A(\mathbf{C})}\frac{(2Z-1)}{f(Z,S|\mathbf{C})} \\
&\quad\times \bigg( S[Y-E(Y|Z,S=1,\mathbf{C})-\beta(\mathbf{C})\{A-f(A=1|Z,S=1,\mathbf{C})\}]\\&\quad-(1-S)\{Y-E(Y|Z,S=0,\mathbf{C})\} \bigg)U(\mathbf{O}) \bigg\}.
\end{align*}
We furthermore have that 
\begin{align*}
&Y-E(Y|Z,S=1,\mathbf{C})-\{A-E(A|Z,S=1,\mathbf{C})\}\beta(\mathbf{C})\\
&=Y-E(Y|Z=0,S=1,\mathbf{C})-\delta^Y(\mathbf{C})Z-\beta(\mathbf{C})A+\beta(\mathbf{C})f(A=1|Z=0,S=1,\mathbf{C})+\beta(\mathbf{C})\delta^A(\mathbf{C})Z\\
&=Y-E(Y|Z=0,S=1,\mathbf{C})-\beta(\mathbf{C})\delta^A(\mathbf{C})Z-t(1,\mathbf{C})Z\\&\quad-\beta(\mathbf{C})A+\beta(\mathbf{C})f(A=1|Z=0,S=1,\mathbf{C})+\beta(\mathbf{C})\delta^A(\mathbf{C})Z\\
&=Y-E(Y|Z=0,S=1,\mathbf{C})-\beta(\mathbf{C})\{A-f(A=1|Z=0,S=1,\mathbf{C})\}-t(1,\mathbf{C})Z
\end{align*}
such that 
\begin{align*}
(i)&=\frac{1}{f(A=1,S=1)} E \bigg\{\frac{f(A=1,S=1|\mathbf{C})}{\delta^A(\mathbf{C})}\frac{(2Z-1)}{f(Z,S|\mathbf{C})} \\
&\quad\times \bigg( S[Y-E(Y|Z=0,S=1,\mathbf{C})-\beta(\mathbf{C})\{A-f(A=1|Z=0,S=1,\mathbf{C})\}-t(1,\mathbf{C})Z]\\&\quad-(1-S)\{Y-t(1,\mathbf{C})Z-E(Y|Z=0,S=0,\mathbf{C})\} \bigg)U(\mathbf{O}) \bigg\}\\
&=\frac{1}{f(A=1,S=1)} E \bigg\{\frac{f(A=1,S=1|\mathbf{C})}{\delta^A(\mathbf{C})}\frac{(2Z-1)(2S-1)}{f(Z,S|\mathbf{C})} \\
&\quad\times [Y-E(Y|Z=0,S,\mathbf{C})-\beta(\mathbf{C})\{A-f(A=1|Z=0,S=1,\mathbf{C})\}S-t(1,\mathbf{C})Z]U(\mathbf{O}) \bigg\}.
\end{align*}
The main result then follows by combining $(i)$ and $(ii)$. 
\end{proof}

\section{Simulation studies}\label{sm:sims}

\subsection{Additional estimators}\label{sm:add_est}

\subsection*{Two-stage least squares}

Under NEM, a two-stage least squares estimator can be obtained by solving the equations 
\begin{align*}
0=\sum^n_{i=1}S_im(\textbf{Z}_{i},\mathbf{C}_i)\{Y_i-\beta(\mathbf{A}_i,\mathbf{C}_i;\psi^\dagger)S_i-t(\mathbf{Z}_i,\mathbf{C}_i;\nu^\dagger)-b_0(\mathbf{C}_i;\theta^\dagger_0)-b_1(\mathbf{C}_i;\theta^\dagger_1)S_i\}
\end{align*}
for $\psi^\dagger$ \citep{richardson2021bespoke}, where $m$ is of the same dimension of $\psi^\dagger$.  In practice, the unknown values $\nu^\dagger$ and $\theta^\dagger$ can be substitued by the estimates $\tilde{\nu}$ and $\hat{\theta}$ as described in Section \ref{sm:est_strat}. Since this approach relies on a parametric model for the outcome, one can set $E(S|\mathbf{Z,C})$ to zero in the equations for $\theta^\dagger$; this is what was done to obtain the estimator $\hat{\psi}_{TSLS}$ in the simulations.

\subsection*{G-estimation: two strategies}

One can construct a $g$-estimator via solving the equations
\begin{align*}
0=&\sum^n_{i=1}S_i[m(\textbf{Z}_{i},\mathbf{C}_i)-E\{m(\textbf{Z}_{i},\mathbf{C}_i)|S_i=1,\mathbf{C}_i;\tau^\dagger,\rho^\dagger\}\\&\times\{Y_i-\beta(\mathbf{A}_i,\mathbf{C}_i;\psi^\dagger)S_i-t(\mathbf{Z}_i,\mathbf{C}_i;\nu^\dagger)-b_0(\mathbf{C}_i;\theta^\dagger_0)-b_1(\mathbf{C}_i;\theta^\dagger_1)S_i\}
\end{align*}
for $\psi^\dagger$. \citet{richardson2021bespoke} note that the above equations are doubly robust, and yield an estimator that is unbiased under the union model $\mathcal{M}_{NEM}\cap(\mathcal{M}_1\cup\mathcal{M}_2)$ i.e. so long as either $f(\mathbf{Z}|S=1,\mathbf{C})$ or $b_0(\mathbf{C})$ and $b_1(\mathbf{C})$ are correctly modelled. In order to obtain a doubly robust estimator, $\nu^\dagger$ must be estimated using a estimator consistent under the laws $\mathcal{M}_1\cap \mathcal{M}_2$. In the simulations, $\nu^\dagger$ was estimated via the equations given in Section \ref{sec_mr} for $\hat{\nu}$, with $b_0(\mathbf{C})$ fixed at zero; similarly, in the $g$-estimating equations used to estimate $\psi$ as $\hat{\psi}_{g-Z}$, $b_0(\mathbf{C})$ and $b_1(\mathbf{C})$ were fixed at zero. 

An alternative doubly $g$-estimator can be obtained as $\tilde{\psi}$ based on the equations in Section \ref{sm:est_strat}, step 4; the resulting estimator is then unbiased under the union model $\mathcal{M}_{NEM}\cap(\mathcal{M}_1\cup\mathcal{M}_3)$. In order to construct the estimator $\hat{\psi}_{g-S}$ in the simulations, we set $t(\mathbf{Z,C})$ and $b_0(\mathbf{C})$ at zero.

\subsection*{Inverse probability weighted estimation}

An inverse probability weighted (IPW) estimator can be obtained via solving the estimating equations:
\[0=\sum^n_{i=1}\phi(\mathbf{Z}_i,S_i,\mathbf{C}_i;\tau^\dagger,\alpha^\dagger,\rho^\dagger)\{Y_i-\beta(\mathbf{A}_i,\mathbf{C}_i;\psi^\dagger)S_i\}\]
for $\psi^\dagger$. One can show along the lines of the proof of Theorem \ref{triply} that the resulting estimator is unbiased under the model $\mathcal{M}_{NEM}\cap\mathcal{M}_4$. In the simulations, we estimated $\tau^\dagger$, $\alpha^\dagger$ and $\rho^\dagger$ as proposed in Section \ref{sec_mr}, in order to construct the estimator $\hat{\psi}_{IPW}$

\subsection{Simulation set-up and results}

We conducted a series of simulation studies to first evaluate the robustness properties of the proposed estimators from Section \ref{sec_mr}. Specifically, we generated covariates $C_1$ and $C_2$ from a Bernoulli distribution with expectation 0.5 and a standard normal distribution respectively; let $\mathbf{C}=(C_1,C_2)$. An unmeasured confounder $U$ was also generated as a random binary variable mean with expectation 0.5. Then the sample selection probability was $f(S=1|\mathbf{C})=expit(-0.5+C_1+0.6C_2+0.5C_1C_2)$, and $Z$ was generated from a Bernoulli distribution where $f(Z=1|\mathbf{C})=expit(0.25C_1-0.25C_2+0.5C_1C_2)$. Generating $S$ in this manner led to around 60\% of individuals in a simulated data set being members of the population of interest ($S$=1). The exposure was simulated from a Bernoulli distribution with $f(A=1|Z,S=1,\mathbf{C},U)=expit(1-1.5Z-0.75C_1-0.3C_2-0.5C_1C_2+U)$ for individuals in the population $S=1$ and was fixed at zero for all others. Finally, we generated the outcome from the distribution 
\begin{align*}
\mathcal{N}(&1+U+0.5C_1+0.5C_2-0.5C_1C_2+Z(1-0.4C_1-0.4C_2+0.5C_1C_2)\\&+S(A+0.5C_1+0.5C_2+0.5C_1C_2),1).\end{align*} 
This data generating mechanism is compatible with Assumptions \ref{ref}-\ref{pem}, but not with conditional exchangeability if one does not have access to $U$. Nor would it be compatible with the instrumental variable conditions, since both the exclusion restriction and unconfoundedness are violated. 

In our simulations, we considered a suite of estimators of the conditional average treatment effect on the treated in the population of interest, discussed in the previous subsection. Nuisance parameters were estimated using maximum likelihood, and a sandwich estimator was obtained for each of the estimators given above. For comparison, we also considered a doubly robust $g$-estimator that is valid under a conditional exchangeability assumption \citep{robins1994correcting}, where a linear model was fitted for the outcome, and a logistic model for the exposure; both models were adjusted for $Z$, $C_1$, $C_2$ and a $C_1C_2$ interaction term. Furthermore, we also implemented a standard two-stage least squares estimator under the assumption that $Z$ is a valid instrument. For both stages we fit models that were linear in the instrument/exposure, $C_1$, $C_2$ and a $C_1C_2$ interaction term.

We considered five initial experiments:
\begin{enumerate}
    \item All nuisance models are correctly specified: the intersection model $\mathcal{M}_{NEM}\cap\mathcal{M}_1\cap \mathcal{M}_2 \cap  \mathcal{M}_3\cap  \mathcal{M}_4$ holds.
    \item $f(Z|S=0,\mathbf{C};\tau^\dagger)$ and $f(S|Z=0,\mathbf{C};\alpha^\dagger)$ were misspecified: $\mathcal{M}_{NEM}\cap \mathcal{M}_1$ holds.
    \item $f(S|Z=0,\mathbf{C};\alpha^\dagger)$, $b_0(\mathbf{C};\theta_0^\dagger)$ and $b_1(\mathbf{C};\theta_1^\dagger)$, were misspecified: $\mathcal{M}_{NEM}\cap \mathcal{M}_2$ holds.
   \item $f(Z|S=0,\mathbf{C};\tau^\dagger)$ ,   $b_0(\mathbf{C};\theta_0^\dagger)$ and $t(Z,\mathbf{C};\nu^\dagger)$, were misspecified: $\mathcal{M}_{NEM}\cap \mathcal{M}_3$ holds.
    \item $t(Z,\mathbf{C};\nu^\dagger)$,   $b_0(\mathbf{C};\theta_0^\dagger)$ and $b_1(\mathbf{C};\theta_1^\dagger)$, were misspecified: $\mathcal{M}_{NEM}\cap \mathcal{M}_4$ holds.
\end{enumerate}
Misspecification was induced via omission of interaction terms. For each setting, we simulated 2,000 data sets, with a sample size of 5,000. 

We also performed simulations to evaluate violations of the identification assumptions. In experiment 6, we changed the outcome generating mechanism to
\begin{align*}
\mathcal{N}(&1+U+0.5C_1+0.5C_2-0.5C_1C_2+0.5ZS(1-0.4C_1-0.4C_2+0.5C_1C_2)\\&+S(A+0.5C_1+0.5C_2+0.5C_1C_2),1).\end{align*} 
such that $Z$ has no association with $Y$ in the reference population, but does has an association in the population of interest. In experiment 7, we considered settings where the association between $Z$ and $A$ is weak; the exposure was now generated as $f(A=1|Z,S=1,\mathbf{C},U)=expit(1-0.25Z-0.75C_1-0.3C_2-0.5C_1C_2+U)$ for those with $S=1$. Since the aim was to isolate potential bias due to failure of the identification assumptions, all models in experiments 6 and 7 were correctly specfied.

\begin{table}
\caption{\label{sim_tab}
Simulations evaluating impact of model misspecification. Experiment (Exp), empirical bias for the conditional effect in the treated (Bias), empirical standard error (SE), coverage probability (Cov).}
\centering
\resizebox{\textwidth}{!}{
\begin{tabular}{rrrrrrrrrrrrrrrr}
  \hline
   & \multicolumn{3}{c}{{$\hat{\psi}_{TSLS}$}} & \multicolumn{3}{c}{{$\hat{\psi}_{g-Z}$}} & \multicolumn{3}{c}{{$\hat{\psi}_{g-S}$}} & \multicolumn{3}{c}{{$\hat{\psi}_{g-IPW}$}} & \multicolumn{3}{c}{{$\hat{\psi}_{MR-eff}$}}\\
 \cline{2-4}\cline{5-7}\cline{8-10}\cline{11-13}\cline{14-16}
 {Exp} & {Bias} & {SE} & {Cov} & {Bias} & {SE} & {Cov} & {Bias} & {SE} & {Cov} & {Bias} & {SE} & {Cov} & {Bias} & {SE} & {Cov}\\
 \hline
  1 & 0.01 & 0.26 & 0.95 & -0.00 & 0.41 & 0.95 & 0.01 & 0.23 & 0.96 & 0.01 & 0.33 & 0.95 & 0.01 & 0.23 & 0.95 \\ 
    2 & -0.00 & 0.27 & 0.95 & 0.08 & 0.39 & 0.94 & 0.18 & 0.26 & 0.89 & -0.54 & 0.28 & 0.48 & -0.01 & 0.24 & 0.94 \\ 
    3 & -0.59 & 0.25 & 0.35 & -0.02 & 0.41 & 0.95 & -0.30 & 0.22 & 0.73 & -0.36 & 0.28 & 0.74 & -0.01 & 0.23 & 0.95 \\ 
    4 & -0.51 & 0.26 & 0.48 & -1.13 & 0.36 & 0.10 & 0.00 & 0.24 & 0.95 & -0.25 & 0.32 & 0.85 & -0.00 & 0.23 & 0.95 \\ 
    5 & -0.58 & 0.25 & 0.34 & -0.42 & 0.36 & 0.80 & -0.14 & 0.22 & 0.92 & 0.00 & 0.33 & 0.95 & -0.00 & 0.23 & 0.95 \\ 
   \hline
\end{tabular}}
\end{table}

The results of the experiments 1-5 can be seen in Table \ref{sim_tab}. As a comparison, the benchmark doubly robust $g$-estimator has a bias in all experiments of 0.24, with coverage probability $<$0.01. Furthermore, the benchmark two-stage least squares estimator had a bias of -2.35 with coverage probability 0. We see that the bias and coverage properties of the estimators reflect the theory; the multiply robust estimator is the only approach that maintains good coverage properties across all experiments. The $g$-estimator $\hat{\psi}_{g-Z}$ exhibited wide confidence intervals across settings and sometimes large bias when the models were misspecified; improvements could potentially be made via modelling $b_0(\mathbf{C})$ and making the estimator doubly robust e.g. unbiased under model $\mathcal{M}_{NEM}\cap( \mathcal{M}_1\cup \mathcal{M}_2)$. The efficient multiply robust estimator outperformed the two-stage least squares estimator both in terms of bias in the presence of misspecification, but also precision when models were correct. The latter aspect is perhaps not totally surprising, given that two-stage least squares estimators are not necessarily efficient even when the conditional outcome mean is correctly specified \citep{vansteelandt2018improving}. Although we have only considered two covariates, we expect our estimators to scale with covariate dimension similarly to other M-estimators.

Results for experiments 6-7 can be seen in in Table \ref{sim_tab2}. The doubly robust $g$-estimator under no unmeasured confounding has a bias of 0.24 (coverage $<$0.01) in both experiments. The benchmark two-stage least squares estimator has a bias of -1.17 in experiment 6 and -16.8 in experiment 7 (coverage was zero in both experiments). We see that under violations of partial population exchangeability (experiment 6), estimators exhibit bias and all coverage guarantees are lost. Bias was nevertheless comparable to that of the benchmark two-stage least squares estimator (although this may be an artifact of the setting). When the association between $A$ and $Z$ was weakened, we see that estimators exhibit some finite sample bias, and potentially dramatic increases in variance (in particular, $\hat{\psi}_{g-IPW}$). Nevertheless the coverage of confidence intervals was conservative.

\begin{table}[htbp]
\caption{\label{sim_tab2}
Simulations evaluating impact of identification failure. Experiment (Exp), empirical bias for the conditional effect in the treated (Bias), empirical standard error (SE), coverage probability (Cov).}
\centering
\resizebox{\textwidth}{!}{
\begin{tabular}{rrrrrrrrrrrrrrrr}
  \hline
   & \multicolumn{3}{c}{$\hat{\psi}_{TSLS}$} & \multicolumn{3}{c}{$\hat{\psi}_{g-Z}$} & \multicolumn{3}{c}{$\hat{\psi}_{g-S}$} & \multicolumn{3}{c}{$\hat{\psi}_{g-IPW}$} & \multicolumn{3}{c}{$\hat{\psi}_{MR-eff}$}\\
 \cline{2-4}\cline{5-7}\cline{8-10}\cline{11-13}\cline{14-16}
 Exp & Bias & SE & Cov & Bias & SE & Cov & Bias & SE & Cov & Bias & SE & Cov & Bias & SE & Cov\\
 \hline
6 & -1.18 & 0.27 & 0.00 & -1.19 & 0.40 & 0.14 & -1.27 & 0.26 & 0.00 & -1.32 & 0.31 & 0.01 & -1.23 & 0.24 & 0.00 \\ 
  7 & 0.17 & 5.83 & 0.99 & 0.72 & 17.69 & 0.99 & 0.28 & 7.84 & 0.99 & -0.03 & 39.14 & 1.00 & 0.28 & 8.16 & 0.99 \\ 
   \hline
\end{tabular}}
\label{Kang2}
\end{table}

\bibliographystyle{apalike}
\bibliography{bibfile}

\end{document}